\newcommand{\inShortened}[1]{{\ifthenelse{\isnamedefined{shorten}}{#1}{}}}
\newcommand{\inFull}[1]{{\ifthenelse{\isnamedefined{shorten}}{}{#1}}}
\newtheorem{theorem}{Theorem}[section]
\newtheorem{lemma}[theorem]{Lemma}
\newtheorem{corollary}[theorem]{Corollary}
\newtheorem{observation}[theorem]{Observation}
\newcommand{\defcal}[1]{\expandafter\newcommand\csname c#1\endcsname{{\mathcal{#1}}}}
\newcommand{\defbb}[1]{\expandafter\newcommand\csname b#1\endcsname{{\mathbb{#1}}}}
\newcommand{\defvec}[1]{\expandafter\newcommand\csname v#1\endcsname{{\mathbf{#1}}}}
\newcounter{calBbCounter}
    \edef\capital{\Alph{calBbCounter}}
		\edef\lowerLetter{\alph{calBbCounter}}
\newcommand{\real}{\mathbb{R}_+}
\newcommand{\XM}{\mathcal{X}}
\newcommand{\eps}{\varepsilon}
\newcommand{\vzero}{{\mathbf{0}}}
\newcommand{\iter}[2]{{{#1}^{(#2)}}}
\newcommand{\iters}[3]{{{#1}^{(#2)}_{#3}}}
\newcommand{\OPT}{{\mathsf{opt}}}
\newcommand{\AlgCA}{\textsc{Coordinate-Ascent}\xspace}
\newcommand{\AlgECA}{\textsc{Coordinate-Ascent+}\xspace}
\newcommand{\AlgFECA}{\textsc{Coordinate-Ascent++}\xspace}
\newcommand{\basic}{{\ve}}
\newcommand{\outputCa}{{\vx_{\text{\textnormal{CA}}}}}
\newcommand{\outputCaP}{{\vx_{\text{\textnormal{CA+}}}}}
\newcommand{\outputCaPP}{{\vx_{\text{\textnormal{CA++}}}}}
\title{Continuous Submodular Maximization: Beyond DR-Submodularity}
\author{%
  Moran Feldman\thanks{Department of Computer Science, University of Haifa. E-mail: \texttt{moranfe@cs.haifa.ac.il}}
  \and
  Amin Karbasi\thanks{School of Engineering and Applied Science, Yale University. E-mail: \texttt{amin.karbasi@yale.edu}}
}
\begin{document}

\maketitle

\begin{abstract}
  In this paper, we propose the first continuous optimization algorithms that achieve a constant factor approximation guarantee for the problem of monotone continuous submodular maximization subject to a linear constraint. We first prove that a simple variant of the vanilla coordinate ascent, called \AlgECA, achieves a $(\frac{e-1}{2e-1}-\eps)$-approximation guarantee while performing $O(n/\eps)$ iterations, where the computational complexity of each iteration is roughly $O(n/\sqrt{\eps}+n\log n)$ (here, $n$ denotes the dimension of the optimization problem). We then propose \AlgFECA, that achieves the tight $(1-1/e-\eps)$-approximation guarantee while performing the same number of iterations, but at a higher computational complexity of roughly $O(n^3/\eps^{2.5} + n^3 \log n / \eps^2)$ per iteration. However,   the computation of each round of \AlgFECA can be easily parallelized so that the computational cost per machine scales as $O(n/\sqrt{\eps}+n\log n)$.
\end{abstract}

\section{Introduction}
Submodularity is a fundamental concept in combinatorial optimization,  usually associated with discrete set functions \citep{fujishige91old}.  As submodular functions formalize the intuitive notion of diminishing returns, and thus provide a useful structure,  they appear in a wide range of  modern machine learning applications including various forms of data summarization \citep{lin2012learning, mirzasoleiman2013distributed}, influence maximization \citep{kempe03},  sparse and deep representations \citep{balkanski2016learning,oh2017deep}, fairness \cite{celis2016fair, kazemi2018scalable}, experimental design \citep{harshaw2019submodular}, neural network interpretability \cite{elenbergDFK17}, human-brain mapping \citep{salehi2017submodular}, adversarial robustness \citep{lei2018discrete}, crowd teaching \citep{singla2014near},  to name a few.
Moreover, submodularity  ensures the  tractability of the underlying combinatorial optimization problems as minimization of submodular functions can be done exactly and (constrained) maximization of submodular functions can be done approximately. 

To capture an even larger set of applications, while providing rigorous guarantees,  the discrete notion of submodularity has been generalized in various directions, including adaptive and interactive submodualarity for sequential decision making problems \citep{golovin11, guillory10interactive}, weak submodularity for general set functions with a bounded submodularity distance \citep{das2011submodular} and sequence submodularity for  time series analysis \citep{tschiatschek2017selecting, mitrovic2019adaptive}, among other variants.

Very recently, a surge of new applications   in machine learning and statistics motivated researchers to study continuous submodular functions \citep{bach2015submodular, wolsey1982analysis}, a large class of non-convex/non-concave functions, which may be optimized efficiently.  In particular, it has been shown that  continuous submodular minimization can be done exactly \cite{bach2015submodular}. In contrast, for   continuous submodular maximization, it is usually assumed that the continuous function is not only submodular, but also has the extra condition of diminishing returns. Such functions are usually called continuous DR-submodular \citep{bian2016guaranteed}.  We should highlight that even though in the discrete domain, submodularity and diminishing returns are equivalent; in the continuous domain, the diminishing returns condition  implies continuous submodularity, but not vice versa. 

In this paper, we propose the first algorithms that achieve constant factor approximation guarantees for the maximization of a monotone continuous submodular function subject to a linear constraint. More specifically, \textbf{our contributions} can be summarized as follows:
\begin{itemize}
	\item We develop a variant of the coordinate ascent algorithm, called \AlgECA, that achieves a $(\frac{e-1}{2e-1}-\eps)$-approximation guarantee while performing $O(n/\epsilon)$ iterations, where the computational complexity of each iteration is  $O(n\sqrt{B/\eps}+n\log n)$. Here, $n$ and $B$ denote the dimension of the optimization problem and the $\ell_1$ radius of the  constraint set, respectively. 
	
	\item We  then develop \AlgFECA, that achieves the tight $(1-1/e-\eps)$ approximation guarantee  while performing $O(n/\epsilon)$ iterations, where the computational complexity of each iteration is $O(n^3\sqrt{B}/\eps^{2.5} + n^3 \log n / \eps^2)$. Moreover,   \AlgFECA can be easily parallelized so that the computational complexity  per machine in each round scales as $O(n\sqrt{B/\epsilon}+n\log n)$.
\end{itemize}
 Notably, to establish these results,  we do not assume that the continuous submodular function satisfies the diminishing returns condition.

\subsection{Related Work}
 Continuous submodular functions naturally arise in many machine learning applications such as Adwords for e-commerce and advertising \citep{mehta2007adwords, devanur2012online},  influence and revenue maximization \citep{bian2016guaranteed}, robust budget allocation \citep{staib2017robust},  multi-resolution data summarization \citep{bian2016guaranteed}, learning assignments \citep{golovin2014online},  experimental design \citep{chen2018online}, and MAP inference for determinantal point processes \citep{gillenwater2012near, hassani2019stochastic}. Continuous submodular functions have also been studied in statistics as negative
log-densities of probability distributions. These distributions are referred to as multivariate
totally positive of order 2 (MTP2) \citep{fallat2017total} and classical examples are the multivariate logistic, Gamma and $F$
distributions, as well as characteristic roots of random Wishart matrices \citep{karlin1980classes}.

The focus of the current work is to study continuous submodular maximization. Almost all the existing works in this area consider a  proper subclass of continuous submodular functions, called continuous DR-submodular, which satisfy diminishing returns conditions. In particular, when first order information (i.e., exact or stochastic gradients) is available  \citep{hassanigradient2017} showed that (stochastic) gradient ascent achieves $1/2$-approximation guarantee for monotone continuous DR-submodular functions subject to a general convex body constraint. Interestingly, one can achieve the tight approximation guarantee of $1-1/e$ by using conditional gradient methods \citep{bian2016guaranteed} or its efficient stochastic variants \citep{mokhtari2018conditional, NIPS2019hassani, zhang2019sample}.  A simple variant of the conditional gradient methods can also be applied to  non-monotone DR-submodular functions, which results in a $1/e$-approximation guarantee \citep{NIPS2017_6652, mokhtari2018stochastic, hassani2019stochastic}.   The only work, we are aware of, that goes beyond the above line of work, and considers also non-DR continuous submodular functions is a recent work by \citet{niazadeh2018optimal}, which developed a polynomial time algorithm with a tight $1/2$-approximation guarantee for the problem of continuous submodular maximization subject to a box constraint.

Discrete and continuous submodular maximization problems are inherently related to one another through the multilinear extension \citep{calinescu11maximizing}. Indeed, maximization of the multilinear extension (along with a subsequent rounding) has led to the best theoretical results in many settings, including submodular maximization subject to various complex constraints \citep{feldman2011unified,chekuri2014submodular,buchbinder2019constrained}, online and bandit submodular maximization \citep{zhang2019online, chen2018projection}, decentralized solution \citep{mokhtari2018decentralized, xie2019decentralized}, and algorithms with low adaptivity complexity \cite{chekuri2019parallelizing,balkanski2019optimal, chen2019unconstrained, ene2019submodular}.  
\section{Preliminaries and Problem Formulation}
We first recall a few standard definitions regarding submodular functions. Even though 
submodularity is mostly considered in the discrete 
domain, the notion can
be naturally extended to arbitrary lattices 
\citep{fujishige91old}. To this end, let us consider a 
subset of 
$\real^d$ of the form $\XM = \prod_{i=1}^d 
\XM_i$ where each $\XM_i$ is a compact subset of 
$\real$. A function $F\colon \XM\rightarrow \real$ is  
\textit{submodular} \citep{wolsey82} if 
for all $(\vx,\vy)\in \XM \times \XM$, we have
\begin{equation*}
F(\vx)+ F(\vy) \geq F(\vx\vee \vy) + F(\vx\wedge \vy) \enspace,
\end{equation*}
where $\vx\vee \vy \doteq\max(\vx,\vy)$ (component-wise) and $\vx\wedge \vy \doteq \min (\vx,\vy)$ (component-wise). A submodular function is monotone if for any $\vx,\vy\in\XM$ such that $\vx\leq \vy$, we have $F(\vx)\leq F(\vy)$ (here, by $\vx \leq \vy$ we mean that every element of $\vx$ is less than that of $\vy$). The above definition includes the discrete notion of submodularity over a set by restricting each $\XM_i$ to $\{0,1\}$. In this paper, we mainly consider \textit{continuous submodular} functions, where each $\XM_i$ is a closed interval in $\real$.  When $F$ is twice differentiable, a continuous function is submodular if and only if all cross-second-derivatives are non-positive \citep{bach2015submodular}, i.e., 
\begin{equation*}
\forall i\neq j, \forall \vx\in \XM, ~~ \frac{\partial^2 F(\vx)}{\partial x_i \partial x_j} \leq 0 \enspace.
\end{equation*}
Thus, continuous 
submodular functions can be convex (e.g., $F(\vx) = \sum_{i,j} \phi_{i,j}(x_i-x_j)$ for $\phi_{i,j}$ convex), concave (e.g., $F(\vx)=g(\sum_{i=1}^n\lambda_i x_i)$ for $g$ concave and $\lambda_i$'s non-negative), and neither (e.g.,  quadratic program $F(\vx)=\vx^TQ \vx$ where all off-diagonal elements of $Q$
are non-positive). 

A proper subclass of continuous submodular functions are called \textit{DR-submodular} \citep{bian2016guaranteed, soma2015generalization} if for all $\vx,\vy\in\XM$ such that $\vx\leq \vy$, standard basis vector $\basic_i\in\real^n$ and a non-negative number $z\in\real$ such that $z\basic_i+\vx\in\XM$ and $z \basic_i+\vy\in\XM$, it holds that
$F(z \basic_i+\vx)-F(\vx)\geq F(z\basic_i+\vy)-F(\vy). $
One can easily verify that for a differentiable DR-submodular function the gradient is an antitone mapping, i.e., for all $\vx,\vy \in \XM$ such that $\vx\leq \vy$ we have  $\nabla F(\vx) \geq \nabla F(\vy)$ \citep{bian2016guaranteed}. An important example of a DR-submodular function is the multilinear extension \citep{calinescu2011maximizing}. 

In this paper, we consider the following fundamental optimization problem
\begin{eqnarray}\label{eq:problem}
\max_{\vx\in\XM} F(\vx) \text{ subject to } \lVert \vx \rVert_1\leq B \enspace,
\end{eqnarray}
where $F$ is a non-negative monotone continuous submodular function. Without loss of generality, we assume that each closed interval $\XM_i$ is of the form $[0,u_i]$ since otherwise for $\XM_i = [a_i, a_i+u_i]$ we can always define a corresponding continuous submodular function  $G(\vx) = F(\vx+\va)$, where $\va = [a_1, \dots, a_n]$. Similarly, we assume w.l.o.g., that $u_i\leq B$ for every coordinate $i$. We also assume that $F$ is $L$-smooth, meaning that $\|\nabla F(\vx) - \nabla F(\vy)\|_2 \leq L \|\vx-\vy\|_2$ for some $L\geq 0$ and for all $\vx,\vy\in\XM$. Finally, note that replacing a linear constraint of the form $\sum_{i = 1}^n w_i x_i \leq B$ (where $w_i>0$) with $\lVert \vx \rVert_1\leq B$ does not change the nature of the problem. In this case, we can simply define a corresponding function  $G(\vx)=F\left(\sum_{i = 1}^n x_i \basic_i / w_i\right)$ and solve Problem~\eqref{eq:problem}. This change of course changes $L$ by a factor of $W = \min_{1 \leq i \leq n} w_i$. 
Prior to our work, no constant approximation guarantee  was known for Problem~\eqref{eq:problem}. 

\section{Plain Coordinate Ascent} \label{sec:plain}

In this section we present our plain coordinate ascent algorithm and analyze its guarantee. Our algorithm uses as a black box an algorithm for a one dimensional optimization problem whose properties are summarized by the following proposition. We include the proof of this proposition in Appendix~\ref{app:one_coordinate_ratio}.

\begin{restatable}{proposition}{propOneCoordinateRatio} \label{prop:one_coordinate_ratio}
Given a point $\vx \in [\vzero, \vu]$, a coordinate $i \in [n]$, bounds $0 < a \leq b \leq u_i - x_i$ and a positive parameter $\eps \in (0, 1)$ there is a polynomial time algorithm that runs in $O(\sqrt{B / \eps} + \log (\eps / a))$ time and returns a value $y \in [a, b]$ maximizing the ratio $F(\vx + y\basic_i) / y$ up to an additive error of $\eps L$.
\end{restatable}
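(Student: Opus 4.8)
The plan is to reduce the problem to a single variable and then exploit $L$-smoothness. Fix all coordinates except the $i$-th and set $h(y) \doteq F(\vx + y\basic_i)$ for $y \in [0, u_i - x_i]$, so the task is to output $y \in [a,b]$ whose value of $g(y) \doteq h(y)/y$ is within $\eps L$ of $\max_{y\in[a,b]} g(y)$. The only facts about $h$ I will use are that it is non-negative and non-decreasing (monotonicity of $F$) and that $p \doteq h'$ — which is $\partial_i F$ along this line — is non-negative and $L$-Lipschitz (as $F$ is $L$-smooth).

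The first key step confines the maximizer to a convenient range. From $h(y) = h(0) + \int_0^y p(t)\,dt$ and $yp(y) = \int_0^y p(y)\,dt$ we get $y^2 g'(y) = yp(y) - h(y) = \int_0^y \big(p(y) - p(t)\big)\,dt - h(0)$, so $|p(y)-p(t)| \le L(y-t)$ gives $-\tfrac{L}{2}y^2 - h(0) \le y^2 g'(y) \le \tfrac{L}{2}y^2 - h(0)$. Hence $g'(y) < 0$ whenever $y < y_0 \doteq \sqrt{2h(0)/L}$, so $g$ is strictly decreasing up to $y_0$, and therefore the maximum of $g$ on $[a,b]$ is either $g(a)$ or is attained on $[c_0, b]$ with $c_0 \doteq \max\{a, y_0\}$ (if $y_0 \ge b$ we simply output $a$). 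On $[c_0,b]$ the same inequality gives $|g'(y)| \le L/2 + h(0)/y^2 \le L$, and since $g(y) = p(y)/y - h(y)/y^2$, a second differentiation gives (formally) $g''(y) = (p'(y) - 2g'(y))/y$, so $|g''(y)| \le 3L/y$ on $[c_0,b]$.

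The second key step is a grid search tuned to the bound $|g''(y)|\le 3L/y$. Starting from $c_0$, set $c_{k+1} \doteq \min\{b,\ c_k + \sqrt{8\eps c_k/3}\}$, stop once $c_k = b$, and return the best of $g(a)$ and $g(c_0), g(c_1), \dots$. If the global maximizer $y^\star$ lies in $[c_0,b]$, it is either an endpoint that is itself a grid point, or it is interior to some cell $[c_k, c_{k+1}]$ and then $g'(y^\star)=0$; the closer of $c_k, c_{k+1}$ is within $\tfrac12\sqrt{8\eps c_k/3}$ of $y^\star$, so $g'(y^\star)=0$ together with $|g''|\le 3L/c_k$ on that cell bounds the value gap by $\tfrac{3L}{2c_k}\big(\tfrac12\sqrt{8\eps c_k/3}\big)^2 = \eps L$, the required accuracy. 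For the running time I would show the grid has $O(\sqrt{B/\eps} + \log(\eps/a))$ points: once $c_k \ge 8\eps/3$ one checks $\sqrt{c_{k+1}} - \sqrt{c_k} = \Omega(\sqrt\eps)$, so $O(\sqrt{B/\eps})$ steps reach $b \le B$; while for $c_k < 8\eps/3$ the increment exceeds $c_k$, so $c_k$ more than doubles each step and $O(\log(\eps/a))$ such steps bring it up to scale $\eps$. Each step costs $O(1)$ arithmetic and one evaluation of $F$, matching the claimed bound.

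The step I expect to be the main obstacle is making the curvature argument rigorous under the weak hypothesis that $F$ is only $L$-smooth (so $p$ is Lipschitz but not necessarily differentiable): both the bound on $g''$ and the quadratic value-gap estimate have to be carried out through finite differences and Taylor's theorem with integral remainder applied to $h$ and $p$, rather than via literal second derivatives. Everything else is a careful but routine packaging of a Lipschitz-style grid search, the only fiddly part being the grid-size bookkeeping near the origin, which is what produces the additive logarithmic term.
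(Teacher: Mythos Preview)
Your proposal is correct, and at the highest level it shares the paper's strategy: evaluate $g(y)=h(y)/y$ on a grid whose spacing near a point $y$ is of order $\sqrt{\eps y}$, and bound the grid size by the same ``double until you reach scale $\eps$, then grow like $\sqrt{\cdot}$'' argument. The difference is entirely in how the error at a grid point is controlled.

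The paper never touches $g''$. Its grid is simply $z_0=a$, $z_{k+1}=z_k+\sqrt{\eps z_k}$, together with $b$. Letting $y^\star$ be the true maximizer and $y_M$ the largest grid point not exceeding it, the key observation is that the first-order condition for $g$ can be rewritten as $h'(y^\star)=h(y^\star)/y^\star=g(y^\star)$; then $L$-smoothness of $h$ (not of $g$) gives
\[
h(y^\star)-h(y_M)=\int_{y_M}^{y^\star} h'(t)\,dt \le (y^\star-y_M)\,g(y^\star)+\tfrac{L}{2}(y^\star-y_M)^2,
\]
which rearranges directly to $g(y_M)\ge g(y^\star)-\tfrac{L}{2y_M}(y^\star-y_M)^2$; the grid spacing $y^\star-y_M\le\sqrt{\eps y_M}$ then bounds the last term by $\eps L/2$. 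Boundary maximizers are handled for free since $a,b\in M$.

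Your route instead (i) pushes the maximizer to $a$ or past $y_0=\sqrt{2h(0)/L}$, (ii) bounds $|g'|\le L$ there, (iii) derives $|g''|\le 3L/y$, and (iv) applies Taylor from the nearer grid point. This works, but it costs you the $y_0$ split, the non-differentiability issue you flag, a somewhat looser step-size constant, and---as written---knowledge of $L$ to compute $c_0$, whereas the paper's algorithm is $L$-oblivious. The paper's trick of using $h'(y^\star)=g(y^\star)$ to push all the smoothness work onto $h$ eliminates each of these complications in one stroke; you may find it a worthwhile simplification of your own argument.
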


Using the algorithm whose existence is guaranteed by the last proposition, we can now formally state our coordinate ascent algorithm as Algorithm~\ref{alg:greedy}. This algorithm gets a quality control parameter $\eps \in (0, \nicefrac{1}{4})$.

\begin{algorithm}
\DontPrintSemicolon
\caption{\AlgCA$(\eps)$} \label{alg:greedy}
Let $\vx \gets \vzero$ and $\delta \gets \eps B / n$.\\
\While{$\|\vx\|_1 \leq B$}
{
	Let $C \subseteq [n]$ be the set of coordinates $i \in [n]$ for which $x_i < u_i$ (i.e., these coordinates can be increased in $\vx$ to some positive extent without violating feasibility).\\
	\For{every $i \in C$}
	{
		Let $d'_i$ be the maximum amount by which $x_i$ can be increased without violating feasibility. Formally, $d'_i = \min\{u_i - x_i, B - \|\vx\|_1\}$.\\
		Use the algorithm suggested by Proposition~\ref{prop:one_coordinate_ratio} to find a value $d_i \in [\min\{d'_i, \delta\}, d'_i]$ maximizing $\frac{F(\vx + d_i\basic_i) - F(\vx)}{d_i}$ up to an additive error of $\eps L$.
	}
	Let $j$ be the coordinate of $C$ maximizing $\frac{F(\vx + d_j\basic_j) - F(\vx)}{d_j}$, and update $\vx \gets \vx + d_j\basic_j$.
}
\Return{$\vx$}.
\end{algorithm}

We begin the analysis of Algorithm~\ref{alg:greedy} with the following observation that bounds its time complexity.\inShortened{ The proof of this observation and some other proofs appearing later in this paper have been moved to Appendix~\ref{app:missing} due to space constraints. In a nutshell, the observation holds since each iteration of the main loop of Algorithm~\ref{alg:greedy} producees progress in at least one of three ways: making $x_j = u_j$ for some coordinate $j$, making $\|\vx\|_1$ as large as $B$, or increasing $\|\vx\|_1$ by at least $\delta$.}
\begin{restatable}{observation}{obsComplexityGreedy} \label{obs:complexity_greedy}
The main loop of Algorithm~\ref{alg:greedy} makes at most $O(n/\eps)$ iterations, and each iteration runs in $O(n\sqrt{B/\eps} + n \log n)$ time. Thus, the entire algorithm runs in $O(n^2\sqrt{B}/\eps^{1.5} + n^2 \log n / \eps)$ time.
\end{restatable}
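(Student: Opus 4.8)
The plan is to carry out the three-way case analysis hinted at above. I first classify each iteration of the main loop. Consider an iteration that selects coordinate $j \in C$ and updates $\vx \gets \vx + d_j\basic_j$, where $d_j \in [\min\{d'_j, \delta\}, d'_j]$ with $d'_j = \min\{u_j - x_j, B - \|\vx\|_1\}$. If $d_j \geq \delta$ I call the iteration \emph{large}. Otherwise $\min\{d'_j, \delta\} \leq d_j < \delta$, which forces $d'_j < \delta$ and hence $\min\{d'_j, \delta\} = d'_j$; together with $d_j \leq d'_j$ this gives $d_j = d'_j$. So a non-large iteration raises $x_j$ by exactly $d'_j$, and afterwards either $x_j = u_j$ (if $u_j - x_j$ attains the minimum defining $d'_j$) or $\|\vx\|_1 = B$ (if $B - \|\vx\|_1$ does). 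Hence every iteration falls into exactly one of three groups: large iterations, iterations that saturate a coordinate, and iterations that exhaust the budget.

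I then bound the size of each group. Since no coordinate of $\vx$ ever decreases during the execution, once $x_j = u_j$ the index $j$ leaves $C$ for good, so at most $n$ iterations saturate a coordinate. An iteration that exhausts the budget makes $\|\vx\|_1 = B$; as $\|\vx\|_1$ is non-decreasing and cannot usefully grow past this point, at most one such iteration occurs (and it is the last). For large iterations, note that $\|\vx\|_1 \leq B$ at the start of every iteration (by the loop guard), that $\|\vx\|_1$ is non-decreasing, and that each large iteration increases it by at least $\delta = \eps B/n$; inspecting $\|\vx\|_1$ just before the last large iteration gives at most $B/\delta + 1 = n/\eps + 1$ large iterations. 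Adding the three bounds shows the loop runs $O(n/\eps)$ times.

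It remains to account for the cost of one iteration. Forming $C$ and all the $d'_i$ values costs $O(n)$, and so does the final argmax over $C$. In between, the algorithm invokes the routine of Proposition~\ref{prop:one_coordinate_ratio} once per index of $C$; an invocation runs in $O(\sqrt{B/\eps} + \log(\eps/a_i))$ time with $a_i = \min\{d'_i, \delta\}$, and whenever $a_i = d'_i$ the feasible interval $[a_i, d'_i]$ collapses to a point so the routine returns immediately -- thus the logarithmic term is relevant only when $a_i = \delta = \eps B/n$, in which case it is $O(\log n)$. Summing over the at most $n$ indices of $C$ yields $O(n\sqrt{B/\eps} + n\log n)$ per iteration, and multiplying by the $O(n/\eps)$ bound on the number of iterations gives the claimed $O(n^2\sqrt{B}/\eps^{1.5} + n^2\log n/\eps)$.

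I expect the only delicate point to be the classification step -- in particular the short chain of inequalities showing that any step of length below $\delta$ is forced to be maximal, so that it either permanently removes a coordinate from $C$ or ends the algorithm -- together with the monotonicity argument bounding the number of large steps; everything else is routine bookkeeping.
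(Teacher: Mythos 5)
Your proof is correct and follows essentially the same approach as the paper: a three-way classification of iterations (saturating a coordinate, exhausting the budget, or advancing $\|\vx\|_1$ by at least $\delta$), bounded respectively by $n$, $1$, and $O(B/\delta) = O(n/\eps)$, followed by the same per-iteration accounting of the $n$ calls to the routine of Proposition~\ref{prop:one_coordinate_ratio} with $a = \delta$. The only cosmetic difference is that you primary-key the split on whether $d_j \geq \delta$ and then derive that a small step must be maximal, whereas the paper primary-keys on which branch of the $\min$ in $d'_j$ was hit; these are equivalent.
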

\newcommand{\proofComplexityGreedy}{
\begin{proof}
We note that the way in which the algorithm assigns a value to $d_j$ implies that in any iteration of the main loop of Algorithm~\ref{alg:greedy} one of the following must happen.
\begin{enumerate}
	\item One option is that $d_j = u_j - x_j$. When this happens, the value of $x_j$ becomes equal to $u_j$, and thus, this is the last iteration in which the coordinate $j$ belongs to the set $C$. \label{item:element_equality}
	\item Another option is that $d_j = B - \|\vx\|_1$. In this case, $\|\vx\|_1$ becomes equal to $B$ following the iteration, and thus, the algorithm terminates following this iteration. \label{item:constraint_tightness}
	\item If neither of the previous options happens, then the value $\|\vx\|_1$ increases by at least $\delta$ following the iteration. \label{item:delta_increase}
\end{enumerate}
There can be at most $n$ iterations in which Option~\ref{item:element_equality} happens since there are only $n$ coordinates, at most a single iteration in which Option~\ref{item:constraint_tightness} happens and at most $B/\delta = n / \eps$ iterations in which Option~\ref{item:delta_increase} happens (since the value of $\|\vx\|_1$ cannot exceed $B$). Thus, the total number of iterations is at most
\[
	n + 1 + \frac{n}{\eps}
	=
	O(\eps^{-1} n)
	\enspace.
\]

We now note that every single iteration of the main loop of Algorithm~\ref{alg:greedy} requires $O(n)$ time plus the time required for up to $n$ executions of the algorithm whose existence is guaranteed by Proposition~\ref{prop:one_coordinate_ratio}. Furthermore, we can assume that each execution of the last algorithm gets $a = \delta$ because we always look for $d_i$ either inside a range containing a single value or a range whose lower bound is $\delta$. Thus, the time required for each such execution is upper bounded by
\[
	O\left(\sqrt{\frac{B}{\eps}} + \log\left(\frac{\eps}{\delta}\right)\right)
	=
	O\left(\sqrt{\frac{B}{\eps}} + \log\left(\frac{n}{B}\right)\right)
	=
	O(B/\eps^{0.5} + \log n)
	\enspace,
\]
and the space required for the entire iteration of the main loop of Algorithm~\ref{alg:greedy} is at most
\[
	n \cdot O(B/\eps^{0.5} + \log n) + O(n)
	=
	O(nB/\eps^{0.5} + n \log n)
	\enspace.
	\tag*{\qedhere}
\]
\end{proof}}
\inFull{\proofComplexityGreedy}

Fix now some feasible solution $y \in [\vzero, \vu]$. Intuitively, we say that an iteration of the main loop of Algorithm~\ref{alg:greedy} is \emph{good} (with respect to $\vy$) if, at the beginning of the iteration, the algorithm still has the option to increase each coordinate of $\vx$ to be equal to the corresponding coordinate of $\vy$, and this does not violate the constraint. Formally, an iteration is good  if the inequality $y_i - x_i \leq d'_i$ was true in this iteration for every coordinate $i \in C$ (before the vector $\vx$ was updated at the end of the iteration). Let $\ell$ denote the number of good iterations of the main loop of Algorithm~\ref{alg:greedy}, and let us denote by $\iter{\vx}{h}$ the value of $\vx$ after $h$ iterations for every $0 \leq h \leq \ell$. Using this notation, we can now state and prove the following lemma, which provides a lower bound on the value of $\vx$ after any  number of (good) iterations of Algorithm~\ref{alg:greedy}.

\begin{lemma} \label{lem:conditioned_guarantee}
For every vector $\vy \in [\vzero, \vu]$ and integer $0 \leq h \leq \ell$, $F(\iter{\vx}{h}) \geq (1 - e^{-\|\iter{\vx}{h}\|_1/(\|y\|_1 + \eps B)}) \cdot F(\vy) - \|\iter{\vx}{h}\|_1 \cdot \eps L$.
\end{lemma}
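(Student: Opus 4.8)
The plan is to prove the lemma by induction on $h$, following the standard greedy/continuous-greedy analysis template adapted to the fact that each iteration increases only a single coordinate. The base case $h = 0$ is immediate since $\iter{\vx}{0} = \vzero$ and the bracketed factor $1 - e^{0} = 0$, while $F$ is non-negative. For the inductive step, consider a good iteration $h+1$ (so $h+1 \le \ell$), in which the algorithm picks coordinate $j$ and increment $d_j$, setting $\iter{\vx}{h+1} = \iter{\vx}{h} + d_j\basic_j$.

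The heart of the argument is a lower bound on the marginal gain $F(\iter{\vx}{h+1}) - F(\iter{\vx}{h})$. First I would exploit goodness: since $y_i - \iter{x}{h}_i \le d'_i$ for every $i \in C$, each coordinate $i$ can legally be raised from $\iter{x}{h}_i$ up to $y_i$ (and for $i \notin C$ we already have $\iter{x}{h}_i = u_i \ge y_i$). I want to compare the chosen ratio $\frac{F(\iter{\vx}{h}+d_j\basic_j)-F(\iter{\vx}{h})}{d_j}$ against $\frac{F(\vy \vee \iter{\vx}{h}) - F(\iter{\vx}{h})}{\text{(something like }\|\vy\|_1 + \eps B)}$. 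Concretely, I would write $F(\vy \vee \iter{\vx}{h}) - F(\iter{\vx}{h})$ as a telescoping sum over the coordinates where $y_i > \iter{x}{h}_i$, bumping one coordinate at a time; submodularity of $F$ lets me lower-bound each telescoped term by the single-coordinate marginal $F(\iter{\vx}{h} + (y_i - \iter{x}{h}_i)\basic_i) - F(\iter{\vx}{h})$ (increasing one coordinate starting from a larger point gains no more than starting from $\iter{\vx}{h}$). For each such coordinate, the quantity $z_i := y_i - \iter{x}{h}_i$ lies in $[0, d'_i]$, so $\min\{d'_i,\delta\}$-truncation interacts correctly: if $z_i \ge \delta$ the value $z_i$ itself is a feasible choice for the one-dimensional subroutine, and if $z_i < \delta$ then $z_i \le \delta = \eps B/n$, contributing at most $\eps B/n$ extra "budget" per coordinate, i.e.\ at most $\eps B$ total across all $n$ coordinates — this is where the $+\eps B$ in the exponent's denominator comes from. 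Combining, and using that $j$ maximizes the approximate ratio over $C$ up to additive error $\eps L$, I get
\[
\frac{F(\iter{\vx}{h+1}) - F(\iter{\vx}{h})}{d_j}
\;\ge\;
\frac{F(\vy) - F(\iter{\vx}{h})}{\|\vy\|_1 + \eps B} - \eps L \enspace,
\]
using monotonicity ($F(\vy \vee \iter{\vx}{h}) \ge F(\vy)$) in the numerator. Rearranging,
\[
F(\iter{\vx}{h+1}) \;\ge\; \Bigl(1 - \tfrac{d_j}{\|\vy\|_1+\eps B}\Bigr) F(\iter{\vx}{h}) + \tfrac{d_j}{\|\vy\|_1+\eps B} F(\vy) - d_j \eps L \enspace.
\]

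Finally I would plug in the induction hypothesis for $F(\iter{\vx}{h})$ and simplify. Writing $\gamma = \|\vy\|_1 + \eps B$ and $s = \|\iter{\vx}{h}\|_1$, so $\|\iter{\vx}{h+1}\|_1 = s + d_j$, the recursion on the $F(\vy)$-coefficient becomes $1 - (1 - d_j/\gamma)e^{-s/\gamma} \ge 1 - e^{-d_j/\gamma}e^{-s/\gamma} = 1 - e^{-(s+d_j)/\gamma}$ (using $1 - t \le e^{-t}$ with $t = d_j/\gamma \in [0,1]$), which is exactly the claimed factor; the error term accumulates additively to $(s + d_j)\eps L = \|\iter{\vx}{h+1}\|_1 \eps L$. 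The main obstacle I anticipate is the bookkeeping in the marginal-gain step: correctly handling the coordinates with $z_i < \delta$ (showing their omission/truncation costs only $\eps B$ in aggregate), making sure the telescoping-plus-submodularity bound is applied in the right direction, and confirming that coordinates outside $C$ need no attention because $\iter{x}{h}_i = u_i \ge y_i$ there. Everything else is the routine algebra of the standard $1 - 1/e$-type induction.
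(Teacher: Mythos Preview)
Your proposal is correct and follows essentially the same route as the paper's proof: induct on $h$, compare the chosen ratio to per-coordinate marginals toward $\vy$ (rounding each $z_i<\delta$ up to $\min\{\delta,d'_i\}$, which costs at most $n\delta=\eps B$ in the denominator), combine via submodularity and monotonicity to obtain the recursion $F(\iter{\vx}{h+1})\ge (1-d_j/\gamma)F(\iter{\vx}{h})+(d_j/\gamma)F(\vy)-d_j\eps L$, and close with $1-t\le e^{-t}$. One wording slip to fix: submodularity \emph{upper}-bounds each telescoped term by the single-coordinate marginal from $\iter{\vx}{h}$ (exactly as your parenthetical says), which is the direction you need so that $\sum_i[F(\iter{\vx}{h}+z_i\basic_i)-F(\iter{\vx}{h})]\ge F(\vy\vee\iter{\vx}{h})-F(\iter{\vx}{h})$.
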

\begin{proof}
We prove the lemma by induction on $h$. For $h = 0$, $\|\iter{\vx}{h}\|_1 = 0$, and the lemma follows from the non-negativity of $F$. Thus, it remains to prove the lemma for some $h > 0$ given that it holds for $h - 1$. From this point on we restrict our attention to iteration number $h$ of Algorithm~\ref{alg:greedy}, and thus, when we refer to variables such as $C$ and $d_i$, these variables should be understood as taking the values they are assigned in this iteration. Given this assumption, for every $i \in C$, let us now define a value $o_i$ that is closest to $y_i - \iters{x}{h - 1}{i}$ among all the values in the range to which $d_i$ can belong. Formally,
\[
	o_i
	=
	\min\{\max\{y_i - \iters{x}{h - 1}{i}, \min\{\delta, d'_i\}\}, d'_i\}
	=
	\max\{y_i - \iters{x}{h - 1}{i}, \min\{\delta, d'_i\}\}
	\quad
	\forall\; i \in C
	\enspace,
\]
where the equality holds since the fact that the iteration we consider is a good iteration implies $y_i - \iters{x}{h - 1}{i} \leq d'_i$. Since $o_i$ is a valid choice for $d_i$, we get by the definition of $d_i$ that
\[
	\frac{F(\iter{\vx}{h - 1} + d_i\basic_i) - F(\iter{\vx}{h - 1})}{d_i}
	\geq
	\frac{F(\iter{\vx}{h - 1} + o_i\basic_i) - F(\iter{\vx}{h - 1})}{o_i} - \eps L
	\enspace.
\]
Using the definition of $j$ and the submodularity of $F$, the last inequality implies
{\allowdisplaybreaks
\begin{align} \label{eq:j_average}
	&
	\frac{F(\iter{\vx}{h - 1} + d_j\basic_j) - F(\iter{\vx}{h - 1})}{d_j}
	\geq
	\frac{\sum_{i \in C} o_i \cdot \frac{F(\iter{\vx}{h - 1} + d_i\basic_i) - F(\iter{\vx}{h - 1})}{d_i}}{\sum_{i \in C} o_i} \\ \nonumber
	\geq{} &
	\inFull{\frac{\sum_{i \in C} o_i \cdot \left(\frac{F(\iter{\vx}{h - 1} + o_i\basic_i) - F(\iter{\vx}{h - 1})}{o_i} - \eps L\right)}{\sum_{i \in C} o_i}
	=}
	\frac{\sum_{i \in C} [F(\iter{\vx}{h - 1} + o_i\basic_i) - F(\iter{\vx}{h - 1})]}{\sum_{i \in C} o_i} - \eps L \\ \nonumber
	\geq{} &
	\frac{F(\iter{\vx}{h - 1} + \sum_{i \in C} o_i\basic_i) - F(\iter{\vx}{h - 1})]}{\sum_{i \in C} o_i} - \eps L
	\enspace.
\end{align}%
}%

To understand the rightmost side of the last inequality, we need the following two bounds.
\[
	\sum_{i \in C} o_i
	\leq
	\sum_{i \in C} \max\{y_i, \delta\}
	\leq
	\sum_{i \in C} y_i + n\delta
	\leq
	\|\vy\|_1 + \eps B
	\enspace,
\]
and
\[
	\iter{\vx}{h - 1} + \sum_{i \in C} o_i\basic_i
	\geq
	\iter{\vx}{h - 1} + \sum_{i \in C} (y_i - \iters{x}{h - 1}{i})\basic_i
	\geq
	\vy
	\enspace.
\]

Plugging these bounds into Inequality~\eqref{eq:j_average}, and using the monotonicity of $F$, we get
\[
	\frac{F(\iter{\vx}{h - 1} + d_j\basic_j) - F(\iter{\vx}{h - 1})}{d_j}
	\geq
	\frac{F(\vy) - F(\iter{\vx}{h - 1})]}{\|\vy\|_1 + \eps B} - \eps L
	\enspace.
\]
Since $\iter{\vx}{h} = \iter{\vx}{h - 1} + d_j\basic_j$, the last inequality now yields the following lower bound on $F(\iter{\vx}{h})$.
\begin{align*}
	F(\iter{\vx}{h})
	={} &
	F(\iter{\vx}{h - 1}) + [F(\iter{\vx}{h}) - F(\iter{\vx}{h - 1})]\\
	\geq{} &
	F(\iter{\vx}{h - 1}) + \frac{d_j}{\|\vy\|_1 + \eps B} \cdot [F(\vy) - F(\iter{\vx}{h - 1})] - \eps L d_j\\
	\geq{} &
	\left(1 - \frac{d_j}{\|\vy\|_1 + \eps B}\right) \cdot F(\iter{\vx}{h - 1}) + \frac{d_j}{\|\vy\|_1 + \eps B} \cdot F(\vy) - \eps L d_j
	\enspace.
\end{align*}
Finally, plugging into the last inequality the lower bound on $F(\iter{\vx}{h - 1})$ given by the induction hypothesis, we get
\begin{align*}
	F(\iter{\vx}{h})
	\geq{} &
	\left(1 - \frac{d_j}{\|\vy\|_1 + \eps B}\right) \cdot \left\{(1 - e^{-\|\iter{\vx}{h - 1}\|_1/(\|\vy\|_1 + \eps B)}) \cdot F(\vy) - \|\iter{\vx}{h - 1}\|_1 \cdot \eps L\right\} \\& + \frac{d_j}{\|\vy\|_1 + \eps B} \cdot F(\vy) - \eps L d_j\\
	\geq{} &
	\left(1 - \left(1 - \frac{d_j}{\|\vy\|_1 + \eps B}\right) \cdot e^{-\|\iter{\vx}{h - 1}\|_1/(\|\vy\|_1 + \eps B)}\right) \cdot F(\vy) - (\|\iter{x}{h - 1}\|_1 + d_j) \cdot \eps L\\
	\geq{} &
	\left(1 - e^{-(\|\iter{x}{h - 1}\|_1 + d_j)/(\|\vy\|_1 + \eps B)}\right) \cdot F(\vy) - (\|\iter{x}{h - 1}\|_1 + d_j) \cdot \eps L\\
	={} &
	\left(1 - e^{-\|\iter{x}{h}\|_1/(\|\vy\|_1 + \eps B)} \right) \cdot F(\vy) - \|\iter{x}{h}\|_1 \cdot \eps L
	\enspace.
	\tag*{\qedhere}
\end{align*}
\end{proof}

Our next objective is to get an approximation guarantee for Algorithm~\ref{alg:greedy} based on the last lemma. Such a guarantee appears below as Corollary~\ref{cor:plain_greedy}. However, to prove it we also need the following observation, which shows that lower bounding the value of $F(x)$ at some point during the execution of Algorithm~\ref{alg:greedy} implies the same bound also for the value of the final solution of the algorithm.

\begin{observation} \label{obs:increase}
The value of $F(\vx)$ only increases during the execution of Algorithm~\ref{alg:greedy}.
\end{observation}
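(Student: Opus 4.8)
The plan is to derive this observation directly from the monotonicity of $F$ together with the fact that Algorithm~\ref{alg:greedy} only ever increases coordinates of $\vx$. The only place in which $\vx$ is modified is the update $\vx \gets \vx + d_j\basic_j$ performed at the end of each iteration of the main loop, where $j \in C$ and $d_j$ was chosen from the range $[\min\{d'_j, \delta\}, d'_j]$. Hence, the first step is to argue that $d_j \geq 0$: this holds because $d'_j = \min\{u_j - x_j, B - \|\vx\|_1\}$ is non-negative (the membership $j \in C$ guarantees $x_j < u_j$, and the while-loop condition guarantees $\|\vx\|_1 \leq B$), and the lower endpoint $\min\{d'_j, \delta\}$ of the range from which $d_j$ is drawn is therefore non-negative as well.

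Given that $d_j \geq 0$, the post-update vector $\vx + d_j\basic_j$ dominates the pre-update vector $\vx$ coordinate-wise. Writing $\iter{\vx}{h}$ for the value of $\vx$ after $h$ iterations of the main loop, we thus have $\iter{\vx}{h} \geq \iter{\vx}{h-1}$ for every $h \geq 1$, and the monotonicity of $F$ immediately yields $F(\iter{\vx}{h}) \geq F(\iter{\vx}{h-1})$. Chaining these inequalities over all iterations shows that $F(\vx)$ is non-decreasing throughout the execution of the algorithm, which is exactly the claim.

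I do not expect any genuine obstacle here; the only point worth a second look is the degenerate situation in which the feasibility slack has already been exhausted, so that $d'_j = 0$ and hence $d_j = 0$. In that case $\vx$ is not changed at all by the update, so $F(\vx)$ trivially does not decrease, and the argument above goes through unchanged.
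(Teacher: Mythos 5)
Your proof is correct and takes the same route as the paper's: the paper simply notes that $d_j$ is always non-negative and invokes monotonicity of $F$, which is exactly what you do, just spelled out in more detail (including the verification that the range $[\min\{d'_j,\delta\},d'_j]$ has a non-negative lower endpoint).
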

\begin{proof}
The observation follows from the monotonicity of $F$ since $d_j$ is always non-negative.
\end{proof}

Let $\OPT$ be some optimal solution vector.
\begin{restatable}{corollary}{corPlainGreedy} \label{cor:plain_greedy}
Let $\outputCa$ be the vector outputted by Algorithm~\ref{alg:greedy}, then
$
	F(\outputCa)
	\geq
	(1 - 1/e - B^{-1} \cdot \max\nolimits_{i \in [n]} u_i - \eps) \cdot F(\OPT) - \eps B L
$.
\end{restatable}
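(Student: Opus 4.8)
The plan is to apply Lemma~\ref{lem:conditioned_guarantee} with $\vy = \OPT$ and $h = \ell$, and then to transfer the resulting bound from $\iter{\vx}{\ell}$ to the final output $\outputCa$ via Observation~\ref{obs:increase}. Since $\OPT$ is feasible we have $\OPT \in [\vzero, \vu]$ and $\|\OPT\|_1 \le B$, so the lemma applies and its right-hand side contains the factor $1 - e^{-\|\iter{\vx}{\ell}\|_1/(\|\OPT\|_1 + \eps B)}$. To turn this into the claimed guarantee I need two ingredients: the trivial upper bound $\|\OPT\|_1 + \eps B \le (1+\eps)B$ on the denominator of the exponent, and a matching lower bound $\|\iter{\vx}{\ell}\|_1 \ge B - \max_{i\in[n]} u_i$ on its numerator --- that is, the fact that by the end of the last good iteration the solution $\vx$ has (nearly) exhausted the linear budget.

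Establishing that lower bound is the crux. If Algorithm~\ref{alg:greedy} performs strictly more than $\ell$ iterations, then by definition of $\ell$ iteration $\ell+1$ fails to be good with respect to $\OPT$, so there is a coordinate $i \in C$ (with $C$ and $\vx = \iter{\vx}{\ell}$ evaluated at the start of that iteration) for which $\OPT_i - \iters{x}{\ell}{i} > d'_i = \min\{u_i - \iters{x}{\ell}{i},\, B - \|\iter{\vx}{\ell}\|_1\}$. Feasibility of $\OPT$ gives $\OPT_i \le u_i$, hence $\OPT_i - \iters{x}{\ell}{i} \le u_i - \iters{x}{\ell}{i}$, so the minimum defining $d'_i$ must be attained by its second argument; thus $B - \|\iter{\vx}{\ell}\|_1 = d'_i < \OPT_i - \iters{x}{\ell}{i} \le \OPT_i \le \max_{i\in[n]} u_i$, which is exactly the desired bound. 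If instead the algorithm performs exactly $\ell$ iterations, then its main loop terminated, which happens only when $\|\iter{\vx}{\ell}\|_1 = B$ or when $C$ becomes empty, i.e.\ $\iter{\vx}{\ell} = \vu$; in the former case $\|\iter{\vx}{\ell}\|_1 = B \ge B - \max_i u_i$, while in the latter case $\iter{\vx}{\ell} = \vu \ge \OPT$, so monotonicity of $F$ together with Observation~\ref{obs:increase} already yields $F(\outputCa) \ge F(\vu) \ge F(\OPT)$ --- stronger than the claim (this also covers the degenerate situation $\sum_i u_i \le B$ in which the linear constraint is never active). Hence we may assume $\|\iter{\vx}{\ell}\|_1 \ge B - \max_{i\in[n]} u_i$ from here on.

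Plugging the two bounds into Lemma~\ref{lem:conditioned_guarantee} (with $\vy=\OPT$, $h=\ell$) gives $\|\iter{\vx}{\ell}\|_1/(\|\OPT\|_1 + \eps B) \ge (1-\alpha)/(1+\eps)$, where $\alpha \doteq B^{-1}\max_{i\in[n]} u_i \in [0,1]$ (recall $u_i \le B$ for all $i$). Writing $e^{-(1-\alpha)/(1+\eps)} = e^{-1}\,e^{(\alpha+\eps)/(1+\eps)}$ and invoking the elementary inequality $e^{t} \le 1 + (e-1)t$ for $t \in [0,1]$ --- applied with $t = (\alpha+\eps)/(1+\eps)$, which lies in $[0,1]$ since $\eps < \nicefrac{1}{4}$ --- one gets $e^{-(1-\alpha)/(1+\eps)} \le e^{-1} + \tfrac{e-1}{e}\cdot\tfrac{\alpha+\eps}{1+\eps} \le e^{-1} + \alpha + \eps$, so by monotonicity of $z \mapsto 1 - e^{-z}$ the factor $1 - e^{-\|\iter{\vx}{\ell}\|_1/(\|\OPT\|_1+\eps B)}$ is at least $1 - 1/e - \alpha - \eps = 1 - 1/e - B^{-1}\max_i u_i - \eps$. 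Combining this with $F(\OPT) \ge 0$, bounding the additive error term using $\|\iter{\vx}{\ell}\|_1 \le B$ (so it is at least $-\eps B L$), and finally using Observation~\ref{obs:increase} to replace $\iter{\vx}{\ell}$ by $\outputCa$ on the left-hand side, yields exactly $F(\outputCa) \ge (1 - 1/e - B^{-1}\max_i u_i - \eps)F(\OPT) - \eps B L$.

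The only genuinely delicate step is the second paragraph: reading off, from the failure of iteration $\ell+1$ to be good (or from the termination rule when no such iteration exists), that $\|\vx\|_1$ is within $\max_i u_i$ of $B$ at the end of the last good iteration, and not overlooking the degenerate cases $\iter{\vx}{\ell} = \vu$ and $\sum_i u_i \le B$. Everything after that is a routine one-line exponential estimate.
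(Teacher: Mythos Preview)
Your proof is correct and follows essentially the same route as the paper: apply Lemma~\ref{lem:conditioned_guarantee} with $\vy=\OPT$ and $h=\ell$, argue that $\|\iter{\vx}{\ell}\|_1 \ge B - \max_i u_i$ by analyzing why iteration $\ell+1$ (if it exists) fails to be good, bound the exponent, and pass to $\outputCa$ via Observation~\ref{obs:increase}. Your treatment is in fact slightly more careful than the paper's in one respect: you explicitly handle the degenerate termination case $\iter{\vx}{\ell} = \vu$ (equivalently $\sum_i u_i \le B$), which the paper's proof glosses over by asserting that if $\ell$ is the last iteration then $\|\iter{\vx}{\ell}\|_1 = B$; the only other difference is the specific elementary inequality used for the exponential estimate (you use $e^t \le 1+(e-1)t$ on $[0,1]$, the paper uses $e^{x-1}\le e^{-1}+x$ on $[0,1.5]$), which is cosmetic.
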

\newcommand{\proofPlainGreedy}{
\begin{proof}
By Observation~\ref{obs:increase}, it suffices to argue that
\[
	F(\iter{\vx}{\ell})
	\geq
	(1 - 1/e - B^{-1} \cdot \max\nolimits_{i \in [n]} u_i - \eps) \cdot F(\OPT) - \eps B L
	\enspace.
\]
Thus, in the rest of the proof we prove this inequality.

Plugging $\vy = \OPT$ into Lemma~\ref{lem:conditioned_guarantee}, we get
\begin{equation} \label{eq:general_ell_prime}
	F(\iter{\vx}{\ell})
	\geq
	(1 - e^{-\|\iter{\vx}{\ell}\|_1/(\|\vy\|_1 + \eps B)}) \cdot F(\OPT) - \|\iter{\vx}{\ell}\|_1 \cdot \eps L
	\geq
	(1 - e^{-(1 - \eps)\|\iter{\vx}{\ell}\|_1/B}) \cdot F(\OPT) - \eps BL
	\enspace,
\end{equation}
where the second inequality holds since $\|\vy\|_1$ and $\|\iter{\vx}{\ell}\|_1$ are both upper bounded by $B$. If iteration number $\ell$ is not the last iteration of Algorithm~\ref{alg:greedy}, then the fact that iteration number $\ell + 1$ was not a good iteration implies the existence of a coordinate $i \in [n]$ such that $y_i - \iters{x}{\ell}{i} > d'_i = B - \|\iter{\vx}{\ell}\|_1$ (the last equality holds since the inequalities $y_i - \iters{x}{\ell}{i} > d'_j$ and $u_i \geq y_i$ exclude the possibility of $d'_i = u_i - \iters{x}{\ell}{i}$). Thus, we get in this case
\[
	\|\iter{\vx}{\ell}\|_1
	>
	B - y_i + \iters{x}{\ell}{i}
	\geq
	B - u_i
	\geq
	B - \max\nolimits_{i \in [n]} u_i
	\enspace.
\]
Moreover, the last inequality holds also in the case in which iteration number $\ell$ is the last iteration of Algorithm~\ref{alg:greedy} because in this case $\|\iter{\vx}{\ell}\|_1 = B$. Plugging this into Inequality~\eqref{eq:general_ell_prime}, we get
\begin{align*}
	F(\iter{\vx}{\ell})
	\geq{} &
	(1 - e^{(1 - \eps)(\max_{i \in [n]} u_i/B - 1)}) \cdot F(\OPT) - \eps BL\\
	\geq{} &
	(1 - e^{\eps + \max_{i \in [n]} u_i/B - 1}) \cdot F(\OPT) - \eps BL
	\geq
	(1 - e^{- 1} - B^{-1} \cdot \max\nolimits_{i \in [n]} u_i - \eps) \cdot F(\OPT) - \eps BL
	\enspace,
\end{align*}
where the last inequality holds since $e^{x - 1} \leq e^{-1} + x$ for $x \in [0, 1.5]$.
\end{proof}%
}
\inFull{\proofPlainGreedy}

The guarantee of Corollary~\ref{cor:plain_greedy} is close to an approximation ratio of $1 - 1/e$ when the upper bound $u_i$ is small compared to $B$ for every $i \in [n]$. In the next two sections we describe enhanced versions of our coordinate ascent algorithm that give an approximation guarantee which is independent of this assumption. We note that, formally, the analyses of these enhanced versions are independent of Corollary~\ref{cor:plain_greedy}. However, the machinery used to prove this corollary is reused in these analyses.
\section{Fast Enhanced Coordinate Ascent}

In this section we describe one simple and fast way to enhance the plain coordinate ascent algorithm from Section~\ref{sec:plain}, leading to the algorithm that we name \AlgECA. Before describing \AlgECA itself, let us give a different formulation for the guarantee of Algorithm~\ref{alg:greedy}.
\begin{lemma} \label{lem:element_out}
There is a coordinate $j \in [n]$ such that the output $\outputCa$ of Algorithm~\ref{alg:greedy} has a value of at least $(1 - 1/e - 2\eps) \cdot F(\OPT - \OPT_j \basic_j) - \eps BL$.
\end{lemma}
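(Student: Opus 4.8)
\textbf{Proof plan for Lemma~\ref{lem:element_out}.}

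The idea is to apply Corollary~\ref{cor:plain_greedy}, or rather the machinery behind it, not to $\OPT$ itself but to a truncated version of $\OPT$ obtained by zeroing out one carefully chosen coordinate. Concretely, for each coordinate $j \in [n]$ let $\OPT^{(j)} \doteq \OPT - \OPT_j \basic_j$, and observe that $\OPT^{(j)}$ is still a feasible solution (its $\ell_1$ norm only decreased). The plan is to run the argument of Lemma~\ref{lem:conditioned_guarantee} with $\vy = \OPT^{(j)}$ and then mimic the last step of the proof of Corollary~\ref{cor:plain_greedy}, but this time exploiting the fact that, after removing coordinate $j$, the ``leftover'' coordinate forcing the last good iteration to end is no longer an obstacle in the same way.

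First I would recall, from Lemma~\ref{lem:conditioned_guarantee} applied to $\vy = \OPT^{(j)}$ and $h = \ell$, that
\[
	F(\iter{\vx}{\ell})
	\geq
	\bigl(1 - e^{-\|\iter{\vx}{\ell}\|_1/(\|\OPT^{(j)}\|_1 + \eps B)}\bigr) \cdot F(\OPT^{(j)}) - \|\iter{\vx}{\ell}\|_1 \cdot \eps L
	\geq
	\bigl(1 - e^{-\|\iter{\vx}{\ell}\|_1/(B + \eps B)}\bigr) \cdot F(\OPT^{(j)}) - \eps B L
	\enspace,
\]
using $\|\OPT^{(j)}\|_1 \leq B$ and $\|\iter{\vx}{\ell}\|_1 \leq B$. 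As in Corollary~\ref{cor:plain_greedy}, if iteration $\ell$ is not the last iteration, then iteration $\ell+1$ is not good, so there is a coordinate $i$ with $\OPT^{(j)}_i - \iters{x}{\ell}{i} > d'_i = B - \|\iter{\vx}{\ell}\|_1$. The key new point: we get to \emph{choose} which coordinate to remove. Define $j$ to be the coordinate $i$ that witnesses failure of goodness in the first iteration after $\ell$ that is not good when we run the algorithm against $\vy = \OPT$ (equivalently, take $j$ to be the ``offending'' coordinate from the proof of Corollary~\ref{cor:plain_greedy}); for this $j$ we have $\OPT^{(j)}_j = 0$, so that coordinate never constrains goodness of the run against $\OPT^{(j)}$, and the leftover argument instead produces a \emph{different} offending coordinate $i \neq j$ with $\OPT^{(j)}_i = \OPT_i \leq u_i \leq B$, giving $\|\iter{\vx}{\ell}\|_1 > B - \OPT_i$. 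Substituting and simplifying as before (using $e^{x-1} \le e^{-1} + x$ on $[0,1.5]$) yields $F(\iter{\vx}{\ell}) \geq (1 - 1/e - \OPT_i/B - O(\eps)) F(\OPT^{(j)}) - \eps B L$; then one absorbs the $\OPT_i/B$ term by a more careful choice of $j$ — choosing $j$ to be the offending coordinate with the \emph{largest} $\OPT_j$ value, so that $\OPT_i \le \OPT_j$, and absorbing $\OPT_j/B$ cleverly, OR, better, choosing $j$ among the (at most $B/\delta$) offending coordinates to be the one of smallest value and bounding via an averaging/pigeonhole argument that the smallest such coordinate is at most $\eps B$.

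The main obstacle I anticipate is precisely this last absorption step: in Corollary~\ref{cor:plain_greedy} the additive loss $B^{-1}\max_i u_i$ is unavoidable against the full $\OPT$, and the whole point of Lemma~\ref{lem:element_out} is to trade it for the loss of \emph{one coordinate of $\OPT$}. So the heart of the proof is to argue that removing the single coordinate $j$ that causes the last good iteration to fail simultaneously (i) removes that coordinate from the ``goodness'' constraint and (ii) forces whatever replacement offending coordinate $i$ appears to satisfy $\OPT_i \le \|\iter{\vx}{\ell}\|_1$-type bounds that collapse the exponent cleanly — i.e. that after deleting $j$, the run stays ``good'' long enough that $\|\iter{\vx}{\ell}\|_1$ is within $\eps B$ of $B$. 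I would handle this by noting that when goodness fails for $\OPT^{(j)}$ it must be because of a coordinate $i$ with $\OPT_i - \iters{x}{\ell}{i} > B - \|\iter{\vx}{\ell}\|_1$, but each increment of $\vx$ in a good iteration is at most $\delta = \eps B/n$ on a fixed coordinate path; combined with the fact that across all of $[n]$ the total ``deficit'' $\sum_i \max\{0, \OPT_i - \iters{x}{\ell}{i}\}$ is controlled, a pigeonhole choice of $j$ ensures the surviving deficit is at most $\eps B$. Finally, once $F(\iter{\vx}{\ell}) \ge (1 - 1/e - 2\eps) F(\OPT^{(j)}) - \eps B L$ is in hand, Observation~\ref{obs:increase} promotes it to the same bound on $F(\outputCa)$, completing the proof.
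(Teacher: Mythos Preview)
Your proposal has a genuine gap at exactly the step you flagged as ``the main obstacle.'' You bound the denominator in the exponent by $\|\OPT^{(j)}\|_1 + \eps B \leq B + \eps B$, which throws away the whole point of deleting coordinate~$j$. The correct observation is that deleting $j$ shrinks the denominator: $\|\OPT^{(j)}\|_1 \leq B - \OPT_j$. Combined with the lower bound $\|\iter{\vx}{\ell'}\|_1 > B - \OPT_j$ coming from the very fact that $j$ is the offending coordinate, the ratio in the exponent becomes at least $(B-\OPT_j)/(B-\OPT_j+\eps B)$, which is close to~$1$ whenever $B-\OPT_j$ is not too small. No second offending coordinate $i$ ever enters the argument, and no pigeonhole is needed.

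Concretely, the paper's proof picks $\ell'$ as the last iteration that is good with respect to $\OPT$, takes $j$ to be the offending coordinate at iteration $\ell'+1$, sets $\vy=\OPT-\OPT_j\basic_j$, and then splits into two cases. If $\|\vy\|_1\geq B/2$ (so $B-\OPT_j\geq B/2$), the exponent is at least $-(B-\OPT_j)/(B-\OPT_j+\eps B)\geq -(1-2\eps)$, giving the bound directly at iteration $\ell'$. If $\|\vy\|_1\leq B/2$, then for this $\vy$ every iteration stays good until $\|\vx\|_1>B/2$, so one applies Lemma~\ref{lem:conditioned_guarantee} at that later iteration with numerator $\geq B/2$ and denominator $\leq B/2+\eps B$, again yielding exponent $\geq -(1-2\eps)$. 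Your attempted fixes---choosing $j$ with largest $\OPT_j$, or pigeonholing over offending coordinates to force a small leftover---do not work in general (consider $\OPT$ supported on two coordinates each of size $B/2$: there is always a ``large'' second offender), and they are unnecessary once you use the sharper denominator bound.
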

\begin{proof}
In this proof we use the notation from Section~\ref{sec:plain}, and consider the last iteration $\ell'$ during this execution in which there is no coordinate $i \in [n]$ such that $\OPT_i - x_i > d'_i$ (where $x_i$ represents here its value at the beginning of the iteration). If $\ell'$ is the last iteration of Algorithm~\ref{alg:greedy}, then all the iterations of Algorithm~\ref{alg:greedy} are good when we choose $\vy = \OPT$. Thus, for this choice of $\vy$ we get $\|\iter{\vx}{\ell'}\|_1 = B$, and by Lemma~\ref{lem:conditioned_guarantee} the value of the output $\outputCa = \iter{\vx}{\ell'}$ of Algorithm~\ref{alg:greedy} is at least
\[
	(1 - e^{-B/(\|\OPT\|_1 + \eps B)}) \cdot F(\OPT) - \eps B L
	\geq
	(1 - e^{\eps - 1}) \cdot F(\OPT) - \eps B L
	\geq
	(1 - e^{-1} - \eps) \cdot F(\OPT) - \eps BL
	\enspace,
\]
where the \inFull{second }inequality holds since $\|\OPT\|_1 \leq B$. This guarantee is stronger than the guarantee of the lemma (because of the monotonicity of $F$), and thus, completes the proof for the current case.

Consider now the case in which iteration $\ell'$ is not the last iteration of Algorithm~\ref{alg:greedy}. In this case we set $j$ to be some coordinate in $[n]$ for which the inequality $\OPT_j - \iters{x}{\ell'}{j} > d'_j$ holds. Choosing $\vy = \OPT - \OPT_j \basic_j$, we get that Algorithm~\ref{alg:greedy} has at least $\ell'$ good iterations. There are now two cases to consider based on the relationship between $\|\vy\|_1$ and $B$. If $\|\vy\|_1 \geq B/2$, then Lemma~\ref{lem:conditioned_guarantee} and Observation~\ref{obs:increase} imply together that the value of $\outputCa$  is at least
\begin{align*}
	F(\outputCa)
	\geq{} &
	F(\iter{\vx}{\ell'})
	\geq
	(1 - e^{-\|\iter{\vx}{\ell'}\|_1/(\|\vy\|_1 + \eps B)}) \cdot F(\OPT - \OPT_j \basic_j) - \|\iter{\vx}{\ell'}\|_1 \cdot \eps L\\
	\geq{}&
	(1 - e^{-(B - \OPT_j)\|_1/(B + \eps B - \OPT_j)}) \cdot F(\OPT - \OPT_j \basic_j) - \eps BL\\
	\geq{} &
	(1 - e^{2\eps - 1}) \cdot F(\OPT - \OPT_j \basic_j) - \eps BL
	\geq
	(1 - e^{- 1} - 2\eps) \cdot F(\OPT - \OPT_j \basic_j) - \eps BL
	\enspace,
\end{align*}
where the \inFull{third}\inShortened{second} inequality holds since $\|\iter{\vx}{\ell'}| \leq B$, but $\OPT_j - \iters{x}{\ell'}{j} > d'_j = B - \|\iter{\vx}{\ell'}\|_1$ (\inFull{like in the proof of Corollary~\ref{cor:plain_greedy}, }the last equality holds since the inequalities $\OPT_j - \iters{x}{\ell'}{j} > d'_j$ and $u_j \geq \OPT_j$ exclude the possibility of $d'_j = u_j - \iters{x}{\ell'}{j}$). The penultimate inequality holds since, by our assumption, $B - \OPT_j \geq \|\vy\|_1 \geq B / 2$.

It remains to consider the caes in which $\|\vy\|_1 \leq B / 2$. In this case, for every coordinate $i \in [n]$ we have $y_i - x_i \leq y_i \leq B/2 \leq B - \|\vx\|_1$ as long as $\|\vx\|_1 \leq B/2$. Thus, all the iterations of Algorithm~\ref{alg:greedy} are good until $\|\vx\|_1$ gets to a size lager than $B / 2$; which implies $\|\iter{\vx}{\ell}\|_1 \geq B / 2$. Hence, Lemma~\ref{lem:conditioned_guarantee} and Observation~\ref{obs:increase} allow us to lower bound $F(\outputCa)$ also by
\begin{align*}
	F(\outputCa)
	\geq{} &
	F(\iter{\vx}{\ell})
	\geq
	(1 - e^{-\|\iter{\vx}{\ell}\|_1/(\|\vy\|_1 + \eps B)}) \cdot F(\OPT - \OPT_j \basic_j) - \|\iter{\vx}{\ell}\|_1 \cdot \eps L\\
	\geq{}&
	(1 - e^{-(B/2)/(B/2 + \eps B)}) \cdot F(\OPT - \OPT_j \basic_j) - \eps BL\\
	\geq{} &
	(1 - e^{2\eps - 1}) \cdot F(\OPT - \OPT_j \basic_j) - \eps BL
	\geq
	(1 - e^{- 1} - 2\eps) \cdot F(\OPT - \OPT_j \basic_j) - \eps BL
	\enspace,
\end{align*}
where the \inFull{third}\inShortened{second} inequality follows from the above discussion and the inequality $\|\iter{\vx}{\ell}| \leq B$ which holds since $\iter{\vx}{\ell}$ is a feasible solution.
\end{proof}

We are now ready to present the enhanced algorithm \AlgECA, which appears as Algorithm~\ref{alg:enhanced_greedy}. The enhancement done in this algorithm, and its analysis, is related to an algorithm of~\citet{cohen2008generalized} obtaining the same approximation guarantee for the special case of discrete monotone submodular functions.

\begin{algorithm}
\caption{\AlgECA$(\eps)$} \label{alg:enhanced_greedy}
Let $\outputCa$ be the solution produced by Algorithm~\ref{alg:greedy} when run with $\eps$.\\
Let $\outputCaP$ be the best solution among the $n + 1$ solutions $\outputCa$ and $\{u_i \cdot \basic_i\}_{i \in [n]}$. \\
\Return{$\outputCaP$}.
\end{algorithm}

It is clear that the time complexity of Algorithm~\ref{alg:enhanced_greedy} is dominated by the time complexity of Algorithm~\ref{alg:greedy}. Thus, we only need to analyze the approximation ratio of Algorithm~\ref{alg:enhanced_greedy}. This is done by the next theorem, whose proofs relies on the fact that one of the solutions checked by Algorithm~\ref{alg:enhanced_greedy} is $u_je_j$ for the coordinate $j$ whose existence is guaranteed by Lemma~\ref{lem:element_out}.

\begin{restatable}{theorem}{thmEnhancedGreedy} \label{thm:enhanced_greedy}
Algorithm~\ref{alg:enhanced_greedy} outputs a solution of value at least $\big(\frac{e - 1}{2e - 1} - 2\eps\big) \cdot F(OPT) - \eps BL \geq (0.387 - 2\eps) \cdot F(OPT) - \eps BL$. It has $O(n/\eps)$ iterations, each running in $O(n\sqrt{B/\eps} + n \log n)$ time, which yields a time complexity of $O(n^2\sqrt{B}/\eps^{1.5} + n^2 \log n / \eps)$.
\end{restatable}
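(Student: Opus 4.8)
The plan is to combine Corollary~\ref{cor:plain_greedy}'s style of argument with Lemma~\ref{lem:element_out} and a case analysis on whether $\OPT$ is concentrated on a single coordinate. Write $\OPT = \OPT_j\basic_j + (\OPT - \OPT_j\basic_j)$ for the coordinate $j$ guaranteed by Lemma~\ref{lem:element_out}. Since $F$ is submodular and non-negative, a standard averaging/submodularity step gives $F(\OPT_j\basic_j) + F(\OPT - \OPT_j\basic_j) \geq F(\OPT) + F(\vzero) \geq F(\OPT)$, hence at least one of the two quantities $F(\OPT_j\basic_j)$ and $F(\OPT - \OPT_j\basic_j)$ is non-negative and, more importantly, we will be able to trade off the bounds on the two pieces. (Strictly, the decomposition $\OPT = (\OPT_j\basic_j)\vee(\OPT-\OPT_j\basic_j)$ with meet $\vzero$ yields $F(\OPT_j\basic_j) + F(\OPT-\OPT_j\basic_j)\ge F(\OPT)+F(\vzero)\ge F(\OPT)$.)

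The two ingredients to balance are: (i) the point $u_j\basic_j$ is one of the candidate solutions checked by Algorithm~\ref{alg:enhanced_greedy}, and $u_j\basic_j \geq \OPT_j\basic_j$ by monotonicity, so $F(\outputCaP)\ge F(u_j\basic_j)\ge F(\OPT_j\basic_j)$; and (ii) by Lemma~\ref{lem:element_out}, $F(\outputCaP)\ge F(\outputCa)\ge (1-1/e-2\eps)\cdot F(\OPT-\OPT_j\basic_j) - \eps BL$. Now take a convex combination of the two lower bounds on $F(\outputCaP)$ with weights chosen to make the coefficients of $F(\OPT_j\basic_j)$ and $F(\OPT-\OPT_j\basic_j)$ proportional so that their sum can be lower-bounded by a multiple of $F(\OPT_j\basic_j)+F(\OPT-\OPT_j\basic_j)\ge F(\OPT)$. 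Concretely, let $\alpha = 1-1/e$; weighting bound (i) by $\frac{\alpha}{1+\alpha}$ and bound (ii) by $\frac{1}{1+\alpha}$ gives coefficient $\frac{\alpha}{1+\alpha}$ in front of each of $F(\OPT_j\basic_j)$ and $F(\OPT-\OPT_j\basic_j)$ (up to the $2\eps$ slack), so
\[
	F(\outputCaP) \geq \tfrac{\alpha}{1+\alpha}\big(F(\OPT_j\basic_j)+F(\OPT-\OPT_j\basic_j)\big) - O(\eps)\cdot F(\OPT) - \eps BL \geq \tfrac{\alpha}{1+\alpha}\cdot F(\OPT) - O(\eps)\cdot F(\OPT) - \eps BL.
\]
Since $\frac{\alpha}{1+\alpha} = \frac{1-1/e}{2-1/e} = \frac{e-1}{2e-1}$, this is exactly the claimed ratio; tracking the $\eps$ terms carefully through the weighting (each is at most a constant times $\eps$, and one can fold them into the stated $2\eps$ by an appropriate reparametrization or simply by noting $\frac{\alpha}{1+\alpha}\cdot 2\eps + \ldots \le 2\eps$ for the relevant range) yields the $\big(\frac{e-1}{2e-1}-2\eps\big)$ bound, and $\frac{e-1}{2e-1}\approx 0.3873 \ge 0.387$.

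For the complexity claim, it suffices to observe that Algorithm~\ref{alg:enhanced_greedy} runs Algorithm~\ref{alg:greedy} once and then evaluates $F$ at $n$ additional points $u_i\basic_i$, which is dominated by the cost already accounted for in Observation~\ref{obs:complexity_greedy}; so the iteration count $O(n/\eps)$, the per-iteration cost $O(n\sqrt{B/\eps}+n\log n)$, and the overall $O(n^2\sqrt{B}/\eps^{1.5}+n^2\log n/\eps)$ all carry over verbatim. I expect the only delicate point to be bookkeeping of the additive $\eps$-terms: one must verify that the error from the approximate one-dimensional oracle (the $\eps L$ per step, accumulated to $\eps BL$ in Lemma~\ref{lem:element_out}), the $2\eps$ multiplicative slack from Lemma~\ref{lem:element_out}, and the reweighting in the convex combination together still collapse into the clean form $\big(\frac{e-1}{2e-1}-2\eps\big)\cdot F(\OPT) - \eps BL$ rather than leaving a stray constant factor on $\eps$ or $\eps BL$ — this is routine but is the step where a sign or constant could slip.
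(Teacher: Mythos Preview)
Your proposal is correct and follows essentially the same approach as the paper: both proofs take the coordinate $j$ from Lemma~\ref{lem:element_out}, use the two lower bounds $F(\outputCaP)\ge F(u_j\basic_j)\ge F(\OPT_j\basic_j)$ and $F(\outputCaP)\ge F(\outputCa)\ge(1-1/e-2\eps)F(\OPT-\OPT_j\basic_j)-\eps BL$, combine them via a convex combination, and finish with the submodularity inequality $F(\OPT_j\basic_j)+F(\OPT-\OPT_j\basic_j)\ge F(\OPT)$. The only cosmetic difference is that the paper folds the $2\eps$ slack into the convex-combination weights (using $\tfrac{1}{2-e^{-1}-2\eps}$ and $\tfrac{1-e^{-1}-2\eps}{2-e^{-1}-2\eps}$) so that the two coefficients match exactly before the final simplification, whereas you use the cleaner weights $\tfrac{1}{1+\alpha}$ and $\tfrac{\alpha}{1+\alpha}$ with $\alpha=1-1/e$ and then absorb the residual $\tfrac{2\eps}{1+\alpha}\le 2\eps$ at the end; both routes yield the stated bound without any hidden constant blowup.
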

\newcommand{\proofEnhancedGreedy}{%
\begin{proof}
By Lemma~\ref{lem:element_out}, there exists a coordinate $j \in [n]$ such that
\[
	F(g) \geq (1 - e^{-1} - 2\eps) \cdot F(\OPT - \OPT_j \basic_j) - \eps B L
	\enspace.
\]
Since Algorithm~\ref{alg:enhanced_greedy} picks a solution $\outputCaP$ that is at least as good as both $\outputCa$ and $u_j \basic_j$, the last inequality and the monotonicity of $F$ imply together that
{\allowdisplaybreaks
\begin{align*}
	F(\outputCaP)
	\geq{} &
	\frac{1}{2 - e^{-1} - 2\eps} \cdot F(g) + \frac{1 - e^{-1} - 2\eps}{2 - e^{-1} - 2\eps} \cdot F(u_j \basic_j)\\
	\geq{} &
	\frac{1}{2 - e^{-1} - 2\eps} \cdot \left[(1 - e^{-1} - 2\eps) \cdot F(\OPT - \OPT_j \basic_j) - \eps BL\right] + \frac{1 - e^{-1} - 2\eps}{2 - e^{-1} - 2\eps} \cdot F(\OPT_j \basic_j)\\
	\geq{} &
	\frac{1 - e^{-1} - 2\eps}{2 - e^{-1} - 2\eps} \cdot \left[F(\OPT - \OPT_j \basic_j) + F(\OPT_j \basic_j)\right] - \eps BL\\
	\geq{} &
	\frac{1 - e^{-1} - 2\eps}{2 - e^{-1} - 2\eps} \cdot F(\OPT) - \eps BL
	\geq
	\left(\frac{1 - e^{-1}}{2 - e^{-1}} - 2\eps\right) \cdot F(\OPT) - \eps BL\\
	={} &
	\left(\frac{e - 1}{2e - 1}  - 2\eps\right) \cdot F(\OPT) - \eps BL
	\enspace,
\end{align*}
}%
where the penultimate inequality holds by the submodularity of $F$.
\end{proof}%
}
\inFull{\proofEnhancedGreedy}
\section{Optimal Approximation Ratio}

In this section we describe a more involved way to enhance the plain coordinate ascent algorithm from Section~\ref{sec:plain}, which leads to the algorithm that we name \AlgFECA and achieves the optimal approximation ratio of 1 - 1/e (up to some error term). This enhancement uses as a black box an algorithm for a one dimensional optimization problem whose properties are summarized by the following proposition. We include the proof of this proposition in Appendix~\ref{app:one_coordinate_get_value}.

\begin{restatable}{proposition}{propOneCoordinateGetValue} \label{prop:one_coordinate_get_value}
Given a point $\vx \in [\vzero, \vu]$, a coordinate $i \in [n]$, a target value $F(\vx) \leq v \leq F(\vx \vee u_i \basic_i)$ and a positive parameter $\eps \in (0, 1)$, there is a polynomial time algorithm that runs in $O(\log(B / \eps))$ time and returns a value $0 \leq y \leq u_i - x_i$ such that
\begin{itemize}
	\item $F(\vx + y \basic_i) \geq v - \eps L$.
	\item There is no value $0 \leq y' < y$ such that $F(\vx + y' \basic_i) \geq v$.
\end{itemize}
\end{restatable}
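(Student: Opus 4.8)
The plan is to restrict $F$ to the line through $\vx$ in direction $\basic_i$: set $g(y) \doteq F(\vx + y\basic_i)$ for $y \in [0, u_i - x_i]$. The monotonicity of $F$ makes $g$ non-decreasing, and its $L$-smoothness makes $g$ continuously differentiable with $g'(y) = \langle \nabla F(\vx + y\basic_i), \basic_i \rangle$, a function that is $L$-Lipschitz because $|g'(y) - g'(y')| \le \|\nabla F(\vx + y\basic_i) - \nabla F(\vx + y'\basic_i)\|_2 \le L|y - y'|$. The hypothesis $F(\vx) \le v \le F(\vx \vee u_i\basic_i)$ reads $g(0) \le v \le g(u_i - x_i)$, so $y^\star \doteq \inf\{y \in [0, u_i - x_i] : g(y) \ge v\}$ is well defined and attained, and a short continuity argument shows $g(y^\star) = v$ whenever $y^\star > 0$. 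If $y^\star = 0$ (i.e.\ $g(0) \ge v$), I would simply return $y = 0$, which trivially has both desired properties; this also subsumes the degenerate case $u_i = x_i$. Observe that, since $g$ is non-decreasing, the second required property of the returned $y$ is precisely the demand that $y \le y^\star$.

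A plain bisection for $y^\star$ does not suffice, because $\nabla F$ need not be bounded (for instance $F$ may be linear with an arbitrarily large coefficient, which is $0$-smooth), so narrowing the search interval to width $\eps$ does not control the value that $g$ takes there to within $\eps L$. Instead, I would run bisection maintaining a bracket $[a, b]$ with $g(a) < v \le g(b)$ (starting from $[0, u_i - x_i]$, which has length $u_i - x_i \le B$) only until $b - a \le \sqrt{\eps}/2$; this takes $O(\log(B/\eps))$ iterations. On so short an interval the $L$-Lipschitzness of $g'$ makes $g$ nearly affine: with $s \doteq (g(b) - g(a))/(b - a)$ we get $s = g'(\xi)$ for some $\xi \in (a,b)$ from the mean value theorem, hence $|g'(t) - s| \le L(b - a)$ on $[a, b]$ and therefore $|g(y) - g(a) - s(y - a)| \le L(b-a)^2 \le \eps L/4$ for every $y \in [a, b]$.

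To finish, I would split into two cases. If $g(a) \ge v - \eps L$, return $a$: it satisfies $g(a) < v$, hence $a \le y^\star$, and $g(a) \ge v - \eps L$. Otherwise $g(a) < v - \eps L$; then $s > 0$, and the point $y \doteq a + (v - \tfrac{\eps L}{2} - g(a))/s$ satisfies $a < y \le b$ because $0 < v - \tfrac{\eps L}{2} - g(a) \le v - g(a) \le g(b) - g(a) = s(b - a)$. Returning this $y$ works: the near-affine estimate gives $g(y) \in [\,v - \tfrac{3\eps L}{4},\, v - \tfrac{\eps L}{4}\,]$, so $g(y) \ge v - \eps L$ (the first property), while $g(y) < v$ forces $y < y^\star$ by monotonicity of $g$ (the second property); when $L = 0$ the estimate is exact, so $g(y) = v$ and $y = y^\star$ since $g$ is then an affine, non-constant function. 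All the post-bisection steps use $O(1)$ arithmetic operations and evaluations of $F$, so the running time is dominated by the bisection, i.e.\ $O(\log(B/\eps))$.

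I expect the main obstacle — and the reason the statement carries an additive slack proportional to $L$ rather than being exact — to be precisely the possible unboundedness of $\nabla F$: it rules out pure bisection and forces the interpolation step, and the delicate part is to show that the interpolated point is simultaneously within $\eps L$ of the target value \emph{and} on the correct side of $y^\star$. The remaining work — the bookkeeping that makes $L(b-a)^2 \le \eps L/4$ hold for every $\eps \in (0,1)$, which is why the bracket is shrunk to $\sqrt{\eps}/2$ rather than to $\eps$, and the handling of the degenerate cases — is routine.
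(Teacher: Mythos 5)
Your proof is correct and follows essentially the same strategy as the paper's: bisect to a short bracket $[a,b]$ with $g(a) < v \le g(b)$, then choose $y$ by linear interpolation using the secant slope, deliberately aiming slightly below $v$ so that the returned point is guaranteed to land on the correct side of $y^\star$ while still achieving $g(y) \ge v - \eps L$. The paper parameterizes the same idea a bit differently — it bisects until $b-a < \eps$ and interpolates toward the true target $v$ but with an \emph{inflated} slope $d = (g(b)-g(a))/(b-a) + \eps L/2$, whereas you bisect until $b-a \le \sqrt\eps/2$ and interpolate toward the \emph{deflated} target $v - \eps L/2$ with the unmodified slope $s$ — but these are cosmetic variations of one construction, both handle the $y^\star=0$ degeneracy the same way, and both give $O(\log(B/\eps))$ time.
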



We can now give a simplified version of \AlgFECA, which appears as Algorithm~\ref{alg:three_coordinate_enhancement_guess}. For simplicity, we assume in the description and analysis of this algorithm that $n \geq 3$. If this is not the case, one can simulate it by adding dummy coordinates that do not affect the value of the objective function. Algorithm~\ref{alg:three_coordinate_enhancement_guess} starts by guessing two coordinates $h_1$ and $h_2$ that contribute a lot of value to $\OPT$. Then it constructs a solution $\vx$ with a small support using two executions of the algorithm whose existence is guaranteed by Proposition~\ref{prop:one_coordinate_get_value}, one execution for each one of the coordinates $h_1$ and $h_2$. It then completes the solution $\vx$ into a full solution by executing Algorithm~\ref{alg:greedy} after ``contracting'' the coordinates $h_1$ and $h_2$, i.e., modifying the objective function so that it implicitly assumes that these coordinates take the values they take in $\vx$.
\begin{algorithm}
\caption{\AlgFECA \textsc{ (Simplified)} $(\eps)$} \label{alg:three_coordinate_enhancement_guess}
Guess the coordinate $h_1 \in [n]$ maximizing $F(\OPT_{h_1} \cdot \basic_{h_1})$ and the coordinate $h_2 \in [n] \setminus \{h_1\}$ other than $h_1$ maximizing $F(\sum_{i \in \{h_1, h_2\}} \OPT_i \cdot \basic_i)$.\\
Let $\vx \gets \vzero$.\\
\For{$i = 1$ \KwTo $2$}
{
	Guess a value $v_i$ obeying \[\max\{F(\vx), F(\vx + \OPT_{h_i}\basic_{h_i}) - \eps \cdot F(\OPT)\} \leq v_i \leq F(\vx + \OPT_{h_i}\basic_{h_i})\enspace.\]\\
	Let $y_i$ be the value returned by the algorithm whose existence is guaranteed by Proposition~\ref{prop:one_coordinate_get_value} given $\vx$ as the input vector, the coordinate $h_i$ and the target value $v_i$.\\
	Update $\vx \gets \vx + y_i \basic_{h_i}$.
}
Execute Algorithm~\ref{alg:greedy} on the instance obtained by removing the coordinates $h_1$ and $h_2$, replacing the objective function with $F'(\vx') = F(\vx' + \vx) - F(\vx)$ and decreasing $B$ by $\|\vx\|_1$. Let $\outputCa$ be the output of Algorithm~\ref{alg:greedy}.\\
Return $\vx + \outputCa$ (we denote this sum by $\outputCaPP$ in the analysis).
\end{algorithm}

We begin the analysis of Algorithm~\ref{alg:three_coordinate_enhancement_guess} by bounding its time complexity.
\begin{restatable}{observation}{obsGuessTimeComplexity} \label{obs:guess_time_complexity}
Assuming the guesses made by Algorithm~\ref{alg:three_coordinate_enhancement_guess} do not require any time, Algorithm~\ref{alg:three_coordinate_enhancement_guess} has $O(n/\eps)$ iterations, each running in $O(n\sqrt{B/\eps} + n \log n)$ time, yielding a time complexity of $O(n^2\sqrt{B}/\eps^{1.5} + n^2 \log n / \eps)$.
\end{restatable}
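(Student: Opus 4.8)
The plan is to reduce the time complexity of Algorithm~\ref{alg:three_coordinate_enhancement_guess} to that of Algorithm~\ref{alg:greedy} by arguing that the overhead introduced by the preprocessing steps (the guessing and the two invocations of Proposition~\ref{prop:one_coordinate_get_value}) is negligible, and that the final call to Algorithm~\ref{alg:greedy} is made on an instance no larger than the original one. Concretely, I would first dispense with the two iterations of the \texttt{for} loop: each performs $O(1)$ evaluations of $F$ plus one execution of the algorithm of Proposition~\ref{prop:one_coordinate_get_value}, which runs in $O(\log(B/\eps))$ time, so the whole loop costs $O(\log(B/\eps)) = O(\log n + \log(1/\eps)) = O(n\log n)$ time, well within the stated per-iteration budget and contributing nothing to the iteration count. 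The assumption in the statement that the guesses are free disposes of the guessing steps.

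Next I would invoke Observation~\ref{obs:complexity_greedy} for the final call to Algorithm~\ref{alg:greedy}. That call is made on an instance with at most $n - 2 < n$ coordinates, with the modified objective $F'(\vx') = F(\vx' + \vx) - F(\vx)$ — which is still non-negative (by monotonicity of $F$ and $\vx' \geq \vzero$), monotone, continuous submodular, and $L$-smooth, since all these properties are preserved under the shift by $\vx$ and subtraction of a constant — and with a budget $B - \|\vx\|_1 \leq B$. Observation~\ref{obs:complexity_greedy} therefore gives $O(n/\eps)$ iterations for this call, each running in $O(n\sqrt{(B - \|\vx\|_1)/\eps} + n \log n) = O(n\sqrt{B/\eps} + n \log n)$ time. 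One subtle point: evaluating $F'$ at a point costs the same as evaluating $F$ (a single addition of the fixed vector $\vx$ and subtraction of the precomputed scalar $F(\vx)$), so the black-box assumptions underlying Proposition~\ref{prop:one_coordinate_ratio} carry over unchanged to the contracted instance.

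Putting these together: the total iteration count is dominated by the $O(n/\eps)$ iterations of the inner Algorithm~\ref{alg:greedy}, and the per-iteration cost is $O(n\sqrt{B/\eps} + n\log n)$ there, while the $O(1)$ preprocessing ``iterations'' each cost only $O(n\log n)$; hence the per-iteration bound $O(n\sqrt{B/\eps} + n\log n)$ holds throughout and the overall running time is $O(n/\eps) \cdot O(n\sqrt{B/\eps} + n\log n) = O(n^2\sqrt{B}/\eps^{1.5} + n^2\log n/\eps)$, as claimed. I do not anticipate a genuine obstacle here — the only thing needing care is to verify explicitly that $F'$ inherits the structural properties (non-negativity, monotonicity, submodularity, $L$-smoothness) and the oracle-evaluation cost of $F$, so that Observation~\ref{obs:complexity_greedy} may legitimately be applied to the recursive call; once that is noted, the bound is immediate.
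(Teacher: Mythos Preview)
Your proposal is correct and follows essentially the same approach as the paper: both argue that the two calls to the algorithm of Proposition~\ref{prop:one_coordinate_get_value} each cost $O(\log(B/\eps))$, which is dominated by the cost of a single iteration of Algorithm~\ref{alg:greedy}, so the overall complexity is inherited directly from Observation~\ref{obs:complexity_greedy}. One small slip: the equality $O(\log(B/\eps)) = O(\log n + \log(1/\eps))$ is not valid in general since $B$ need not be polynomially bounded in $n$; the correct absorption is $\log(B/\eps) = O(\sqrt{B/\eps}) = O(n\sqrt{B/\eps})$, which is how the paper (implicitly) handles it.
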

\newcommand{\proofGuessTimeComplexity}{%
\begin{proof}
Besides the two executions of the algorithm whose existence is guaranteed by Proposition~\ref{prop:one_coordinate_get_value} and the execution of Algorithm~\ref{alg:greedy}, Algorithm~\ref{alg:three_coordinate_enhancement_guess} uses only constant time. Thus, the time complexity of Algorithm~\ref{alg:three_coordinate_enhancement_guess} is upper bounded by the sum of the time complexities of the two other algorithms mentioned. Furthermore, by Proposition~\ref{prop:one_coordinate_get_value}, the total time complexity of the algorithm whose existence is guaranteed by this proposition is only
\[
	O\left(\log \left(\frac{B}{\eps}\right)\right)
	\enspace,
\]
which is upper bounded by the time complexity of a single iteration of Algorithm~\ref{alg:greedy} as given by Observation~\ref{obs:complexity_greedy}. Hence, both the number of iterations and the time per iteration of Algorithm~\ref{alg:three_coordinate_enhancement_guess} are asymptotically identical to the corresponding values for Algorithm~\ref{alg:greedy}.
\end{proof}%
}
\inFull{\proofGuessTimeComplexity}

The next step in the analysis of Algorithm~\ref{alg:three_coordinate_enhancement_guess} is proving some properties of the vector $\vx = \sum_{i \in \{h_1, h_2\}} y_i \cdot \basic_i$ produced by the first part of the algorithm. In a nutshell, these properties holds since the definition of $v_j$ and the properties of Proposition~\ref{prop:one_coordinate_get_value} show together that the value chosen for $x_{h_j}$ by the algorithm of Proposition~\ref{prop:one_coordinate_get_value} gives almost as much value as choosing $\OPT_{h_j}$, but it never overestimates $\OPT_{h_j}$.
\begin{restatable}{lemma}{lemGuessingProperties} \label{lem:guessing_properties}
$F(\vx) \geq F(\sum_{j = 1}^2 \OPT_{h_j} \cdot \basic_{h_j}) - 2\eps \cdot F(OPT) - 2\eps L$ and $\vx \leq \sum_{j = 1}^2 \OPT_{h_j} \cdot \basic_{h_j}$.
\end{restatable}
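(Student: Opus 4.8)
The plan is to trace the vector $\vx$ through the two iterations of the \textbf{for} loop of Algorithm~\ref{alg:three_coordinate_enhancement_guess}: write $\iter{\vx}{0} = \vzero$ and let $\iter{\vx}{i} = \sum_{j=1}^{i} y_j \basic_{h_j}$ be the value of $\vx$ after iteration $i$, so the lemma is a statement about $\iter{\vx}{2}$. As a preliminary, note that since $h_1 \neq h_2$ the coordinate $h_i$ still equals $0$ in $\iter{\vx}{i-1}$, so $\OPT_{h_i} \leq u_{h_i}$ together with the monotonicity of $F$ ensures the guessed target $v_i$ lies in the range $[F(\iter{\vx}{i-1}), F(\iter{\vx}{i-1} \vee u_{h_i}\basic_{h_i})]$ demanded by Proposition~\ref{prop:one_coordinate_get_value}, so the applications of that proposition are legitimate.

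First I would prove the coordinate-wise bound $y_i \leq \OPT_{h_i}$ for $i \in \{1,2\}$, which immediately gives $\vx = \iter{\vx}{2} \leq \sum_{j=1}^{2} \OPT_{h_j}\basic_{h_j}$. Suppose toward a contradiction that $y_i > \OPT_{h_i}$. Then $y' = \OPT_{h_i}$ satisfies $0 \leq y' < y_i$, and by the upper bound $v_i \leq F(\iter{\vx}{i-1} + \OPT_{h_i}\basic_{h_i})$ from the guess of $v_i$ we get $F(\iter{\vx}{i-1} + y'\basic_{h_i}) = F(\iter{\vx}{i-1} + \OPT_{h_i}\basic_{h_i}) \geq v_i$, contradicting the second property of the value returned by Proposition~\ref{prop:one_coordinate_get_value}.

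For the value bound, combine the first property of Proposition~\ref{prop:one_coordinate_get_value} with the lower bound $v_i \geq F(\iter{\vx}{i-1} + \OPT_{h_i}\basic_{h_i}) - \eps \cdot F(\OPT)$ to see that each iteration satisfies $F(\iter{\vx}{i}) = F(\iter{\vx}{i-1} + y_i\basic_{h_i}) \geq v_i - \eps L \geq F(\iter{\vx}{i-1} + \OPT_{h_i}\basic_{h_i}) - \eps \cdot F(\OPT) - \eps L$; in particular, for $i=1$ this reads $F(y_1 \basic_{h_1}) \geq F(\OPT_{h_1}\basic_{h_1}) - \eps \cdot F(\OPT) - \eps L$. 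The one step that genuinely needs submodularity (not just monotonicity) is to transfer this singleton estimate to the setting where coordinate $h_2$ is already present: applying the submodularity inequality to $\va = y_1\basic_{h_1} + \OPT_{h_2}\basic_{h_2}$ and $\vb = \OPT_{h_1}\basic_{h_1}$, for which $y_1 \leq \OPT_{h_1}$ yields $\va \vee \vb = \sum_{j=1}^{2}\OPT_{h_j}\basic_{h_j}$ and $\va \wedge \vb = y_1\basic_{h_1}$, gives $F(y_1\basic_{h_1} + \OPT_{h_2}\basic_{h_2}) \geq F(\sum_{j=1}^{2}\OPT_{h_j}\basic_{h_j}) + F(y_1\basic_{h_1}) - F(\OPT_{h_1}\basic_{h_1}) \geq F(\sum_{j=1}^{2}\OPT_{h_j}\basic_{h_j}) - \eps \cdot F(\OPT) - \eps L$. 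Chaining this with the $i=2$ instance of the per-iteration bound, $F(\iter{\vx}{2}) \geq F(\iter{\vx}{1} + \OPT_{h_2}\basic_{h_2}) - \eps \cdot F(\OPT) - \eps L = F(y_1\basic_{h_1} + \OPT_{h_2}\basic_{h_2}) - \eps \cdot F(\OPT) - \eps L$, produces $F(\vx) = F(\iter{\vx}{2}) \geq F(\sum_{j=1}^{2}\OPT_{h_j}\basic_{h_j}) - 2\eps \cdot F(\OPT) - 2\eps L$, as claimed. I do not anticipate any real difficulty beyond spotting the correct pair $(\va,\vb)$ for the submodularity step; everything else is bookkeeping with the two guarantees of Proposition~\ref{prop:one_coordinate_get_value} and the guess constraints.
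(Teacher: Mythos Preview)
Your proposal is correct and essentially the same as the paper's proof: both use the second guarantee of Proposition~\ref{prop:one_coordinate_get_value} to obtain $y_i \leq \OPT_{h_i}$, use the first guarantee together with the guess bound on $v_i$ to get $F(\iter{\vx}{i}) \geq F(\iter{\vx}{i-1} + \OPT_{h_i}\basic_{h_i}) - \eps F(\OPT) - \eps L$, and apply submodularity (with exactly your pair $\va,\vb$) to pass from $F(y_1\basic_{h_1}+\OPT_{h_2}\basic_{h_2})$ to $F(\sum_{j=1}^2 \OPT_{h_j}\basic_{h_j})$. The only cosmetic difference is that the paper phrases the argument as a two-step induction on the marginal-gain inequality $F(\iter{\vx}{i})-F(\iter{\vx}{i-1}) \geq F(\sum_{j\leq i}\OPT_{h_j}\basic_{h_j})-F(\sum_{j\leq i-1}\OPT_{h_j}\basic_{h_j}) - \eps F(\OPT)-\eps L$ and then telescopes, whereas you unroll the two steps explicitly.
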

\newcommand{\proofGuessingProperties}{%
\begin{proof}
Recall that the support of $\vx$ contains only the coordinates $h_1$ and $h_2$. Thus, to prove the lemma, it suffices to argue that for every $i \in \{1, 2\}$
\begin{align} \label{eq:value_gain}
	F\left(\sum_{j = 1}^i x_{h_j} \cdot \basic_{h_j}\right) - F&\left(\sum_{j = 1}^{i - 1} x_{h_j} \cdot \basic_{h_j}\right)\\\nonumber
	\geq{} &
	F\left(\sum_{j = 1}^i \OPT_{h_j} \cdot \basic_{h_j}\right) - F\left(\sum_{j = 1}^{i - 1} \OPT_{h_j} \cdot \basic_{h_j}\right) - \eps \cdot F(\OPT) - \eps L
\end{align}
and
\begin{equation} \label{eq:lower_bounded_coordinate-wise}
	x_i \leq \OPT_i
	\enspace.
\end{equation}
We prove this by induction on $i$. In other words, we prove that the two above inequalities hold for $i \in \{1, 2\}$ given that they holds for every $i' < i$ that belongs to $\{1, 2\}$ (if there is such an $i'$).

The value of $x_i$ is determined by an execution of the algorithm whose existence is guaranteed by Proposition~\ref{prop:one_coordinate_get_value}. Thus, to prove the above inequalities, we need to use the guarantees of this proposition. Moreover, we notice that this is possible since the target value $v_i$ passed to the algorithm of this proposition clearly falls within the allowed range because $\OPT_{h_i} \leq u_{h_i}$. Hence, by the first guarantee of Proposition~\ref{prop:one_coordinate_get_value},
\[
	F\left(\sum_{j = 1}^i x_{h_j} \cdot \basic_j\right)
	\geq
	v_i - \eps L
	\geq
	F\left(\sum_{j = 1}^{i - 1} x_{h_j} \cdot \basic_{h_j} + \OPT_{h_i}\basic_{h_i}\right) - \eps \cdot F(\OPT) - \eps L
	\enspace.
\]
Inequality~\eqref{eq:value_gain} now follows from the last inequality by subtracting $F\left(\sum_{j = 1}^{i - 1} x_{h_j} \cdot \basic_{h_j}\right)$ from both its sides and observing that, by the submodularity of $F$ and the induction hypothesis,
\[
	F\mspace{-1mu}\left(\sum_{j = 1}^{i - 1} x_{h_j} \cdot \basic_j + \OPT_{h_i}\basic_{h_i}\mspace{-2mu}\right) - F\mspace{-1mu}\left(\sum_{j = 1}^{i - 1} x_{h_j} \cdot \basic_j\mspace{-2mu}\right)
	\mspace{-2mu}\geq\mspace{-1mu}
	F\mspace{-1mu}\left(\sum_{j = 1}^{i} \OPT_{h_j} \cdot \basic_{h_j}\mspace{-2mu}\right) - F\mspace{-1mu}\left(\sum_{j = 1}^{i - 1} \OPT_{h_j} \cdot \basic_{h_j}\mspace{-2mu}\right)
	.
\]

To prove Inequality~\eqref{eq:lower_bounded_coordinate-wise}, we note that the second guarantee of Proposition~\ref{prop:one_coordinate_get_value} implies that for every $0 \leq y < x_{h_j}$ we have
\[
	F\left(\sum_{j = 1}^{i - 1} x_{h_j} \cdot \basic_j + y\basic_{h_i}\right)
	<
	v_i
	\leq
	F\left(\sum_{j = 1}^{i - 1} x_{h_j} \cdot \basic_j + \OPT_j\basic_{h_i}\right)
	\enspace,
\]
and therefore, $\OPT_{h_j}$ cannot fall in the range $[0, x_{h_j})$.
\end{proof}%
}
\inFull{\proofGuessingProperties}

We now ready to prove the approximation guarantee of Algorithm~\ref{alg:three_coordinate_enhancement_guess}. Intuitively, this proof is based on simply adding up the lower bound on $F(\vx)$ given by Lemma~\ref{lem:guessing_properties} and the lower bound on $F'(\outputCa)$ given by Lemma~\ref{lem:element_out}. Some of the ideas used in the proof can be traced back to a recent result by~\citet{perosnal2020nutov}, who described an algorithm achieving $(1 - 1/e)$-approximation for the discrete version of the problem we consider (namely, maximizing a non-negative monontone discrete submodular function subject to a knapsack constraint) using $O(n^4)$ function evaluations. 

\begin{restatable}{lemma}{lemApproximationRatio} \label{lem:approximation_ratio}
Algorithm~\ref{alg:three_coordinate_enhancement_guess} outputs a vector $\outputCaPP$ whose value is at least $(1 - 1/e - 4\eps) \cdot F(\OPT) - \eps (B + 2)L$.
\end{restatable}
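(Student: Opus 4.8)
The strategy is to decompose $\OPT$ into the two ``heavy'' coordinates $h_1, h_2$ and the rest, and then bound the value contributed by each piece separately: the part $\vx$ built in the first phase will carry the value of $\OPT$ restricted to $\{h_1, h_2\}$ (up to small error, by Lemma~\ref{lem:guessing_properties}), and the Algorithm~\ref{alg:greedy} phase run on the contracted instance $F'$ will carry a $(1-1/e)$-fraction of the value of $\OPT$ restricted to the remaining coordinates. Concretely, set $\vo \doteq \sum_{j=1}^2 \OPT_{h_j} \basic_{h_j}$, so that $\OPT = \vo \vee (\OPT - \vo)$ and $\OPT, \vo, \OPT-\vo$ all share the obvious coordinate-wise order relations. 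The key point enabling the whole argument is the second conclusion of Lemma~\ref{lem:guessing_properties}, namely $\vx \le \vo$: this means the vector $\OPT - \vx$ (restricted to the non-contracted coordinates, i.e. $\OPT$ with coordinates $h_1, h_2$ zeroed out) is a \emph{feasible} solution for the contracted instance passed to Algorithm~\ref{alg:greedy}, since its $\ell_1$-norm is at most $\|\OPT\|_1 - \|\vx\|_1 \le B - \|\vx\|_1$, which is exactly the reduced budget. I would apply (the logic behind) Lemma~\ref{lem:element_out} to the contracted instance with this feasible solution playing the role of ``$\OPT$''. That lemma would then hand me a coordinate $j$ (among the non-contracted ones) such that $F'(\outputCa) \ge (1 - 1/e - 2\eps) \cdot F'\big((\OPT - \vx) - (\OPT_j - x_j)\basic_j\big) - \eps BL$ — but wait, there is no enhancement step here, so I should instead apply Lemma~\ref{lem:element_out}'s \emph{internal} argument (the case analysis it performs), or simply invoke Corollary~\ref{cor:plain_greedy}-style reasoning. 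Actually the cleaner route: observe that either all iterations of the contracted Algorithm~\ref{alg:greedy} are good for $\vy = \OPT - \vx$ (giving the full $(1-1/e-\eps)$ bound directly via Lemma~\ref{lem:conditioned_guarantee}), or else — and this is where the third ``heavy'' coordinate intuition enters — there is a coordinate that got excluded, but because $h_1, h_2$ were chosen to be the two \emph{largest} contributors, any such excluded coordinate $j$ satisfies $F(\OPT_j \basic_j) \le F(\OPT_{h_2}\basic_{h_2})$-type bounds, and the loss incurred by dropping it is controlled.

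The main steps, in order: (1) fix notation $\vo = \sum_{j=1}^2 \OPT_{h_j}\basic_{h_j}$ and record $\vx \le \vo$ and $\|\OPT - \vx\|_1 \le B - \|\vx\|_1$ so that $\OPT - \vx$ is feasible for the contracted instance; (2) run the machinery of Lemma~\ref{lem:element_out} / Lemma~\ref{lem:conditioned_guarantee} on the contracted instance with target $\vy = \OPT - \vx$, obtaining $F'(\outputCa) \ge (1 - 1/e - 2\eps) F'(\OPT - \vx - (\OPT_j - x_j)\basic_j) - \eps BL$ for some coordinate $j \notin \{h_1,h_2\}$, or better; (3) rewrite $F'(\vz) = F(\vz + \vx) - F(\vx)$ and $\outputCaPP = \vx + \outputCa$ to turn the previous bound into $F(\outputCaPP) - F(\vx) \ge (1 - 1/e - 2\eps)\big[F(\OPT - (\OPT_j - x_j)\basic_j) - F(\vx)\big] - \eps BL$, using $\vx \le \vo \le \OPT$ and monotonicity to replace $\OPT - \vx - (\OPT_j-x_j)\basic_j + \vx$ by something $\ge \OPT - (\OPT_j-x_j)\basic_j$; (4) add the lower bound $F(\vx) \ge F(\vo) - 2\eps F(\OPT) - 2\eps L$ from Lemma~\ref{lem:guessing_properties}, and use $F(\vo) + F(\OPT_j\basic_j) \ge \dots$ submodularity-style inequalities together with the maximality of $h_1, h_2$ — specifically $F((\OPT - \vo) \vee \text{(anything on one coordinate)})$-type bounds — to absorb the $F(\OPT_j\basic_j)$ correction term into $F(\OPT)$; (5) collect constants and the additive $L$ terms (each Proposition-\ref{prop:one_coordinate_get_value} call contributes $\eps L$, giving the ``$+2$'' in $(B+2)L$), yielding $F(\outputCaPP) \ge (1 - 1/e - 4\eps) F(\OPT) - \eps(B+2)L$.

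I expect step (4), handling the correction term $F(\OPT_j\basic_j)$ for the coordinate $j$ dropped in the contracted run, to be the crux. When $j$ is dropped it is because $\OPT_j - x_j$ exceeds the remaining budget, i.e. $\OPT_j$ is large; but largeness of the \emph{weight} $\OPT_j$ does not immediately bound the \emph{function value} $F(\OPT_j \basic_j)$. The resolution must use that $h_1$ was chosen to maximize $F(\OPT_{h_1}\basic_{h_1})$ and $h_2$ to maximize $F(\OPT_{h_1}\basic_{h_1} + \OPT_{h_2}\basic_{h_2})$ over the remaining coordinate, so $F$ of $\OPT$ restricted to $\{h_1, h_2, j\}$ is at most $F(\vo)$ plus a marginal that, by submodularity and the maximality of $h_2$, is dominated by the marginal of $h_2$ itself — hence $F(\OPT_j\basic_j \text{ added on top of } \vo)$ is small relative to $F(\vo)$, and combined with $F(\OPT - (\OPT_j - x_j)\basic_j)$ this recovers essentially all of $F(\OPT)$. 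Pinning down exactly which submodular inequality chain makes this go through cleanly — likely $F(\OPT) \le F(\OPT - (\OPT_j-x_j)\basic_j) + [F(\vx + (\OPT_j - x_j)\basic_j) - F(\vx)]$ followed by bounding the bracket by $F(\OPT_{h_2}\basic_{h_2} \text{ marginal})$ and then by $F(\OPT) - F(\vo)$-ish quantities — is the delicate bookkeeping I would need to get right, but it is routine submodularity manipulation once the right decomposition is fixed.
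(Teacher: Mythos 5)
Your proposal follows the same route as the paper: use $\vx \leq \sum_{j=1}^2 \OPT_{h_j}\basic_{h_j}$ from Lemma~\ref{lem:guessing_properties} to argue that $\OPT$ restricted away from $h_1,h_2$ is feasible for the contracted instance, invoke Lemma~\ref{lem:element_out} there to obtain a $(1-1/e-2\eps)$ guarantee with one further coordinate $i$ discounted, and add back $F(\vx)$ via the other half of Lemma~\ref{lem:guessing_properties}. The ``delicate bookkeeping'' you defer is pinned down in the paper by the single bound
\[
F\Big(\sum_{j=1}^2 \OPT_{h_j}\basic_{h_j} + \OPT_i\basic_i\Big) - F\Big(\sum_{j=1}^2 \OPT_{h_j}\basic_{h_j}\Big) \;\leq\; \frac{1}{2}\,F\Big(\sum_{j=1}^2 \OPT_{h_j}\basic_{h_j}\Big)\enspace,
\]
obtained by averaging two telescoping gains for this marginal (one through $\OPT_i\basic_i$ alone and one through $\OPT_{h_1}\basic_{h_1}+\OPT_i\basic_i$, via submodularity) and applying the maximality of $h_1$ to the first and of $h_2$ to the second; your tentative ``$F(\OPT_j\basic_j)\leq F(\OPT_{h_2}\basic_{h_2})$'' is not quite the right form, since $h_2$ maximizes the value conditioned on $h_1$, not the singleton value. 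With that $\frac{1}{2}$-bound in hand, the coefficient $1 - \frac{3}{2}(1-1/e-2\eps) \geq 0$ lets the $F(\sum_j \OPT_{h_j}\basic_{h_j})$ terms be dropped by non-negativity, giving exactly $(1-1/e-4\eps)\cdot F(\OPT) - \eps(B+2)L$.
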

\newcommand{\proofApproximationRatio}{%
\begin{proof}
Since $\vx \leq \sum_{j = 1}^2 \OPT_{h_j} \cdot \basic_{h_j}$ by Lemma~\ref{lem:guessing_properties}, the submodularity of $F$ guarantees that
\begin{align*}
	F'\left(\OPT - \sum_{j = 1}^2 \OPT_{h_j} \cdot \basic_{h_j}\right)
	={} &
	F\left(\OPT + \sum_{j = 1}^2 (x_{h_j} - \OPT_{h_j}) \cdot \basic_{h_j}\right) - F(\vx)\\
	\geq{} &
	F(\OPT) - F\left(\sum_{j = 1}^2 \OPT_{h_j} \cdot \basic_{h_j}\right)
	\enspace.
\end{align*}

Therefore, since $\OPT - \sum_{j = 1}^2 \OPT_{h_j} \cdot \basic_{h_j}$ is one feasible solution for the instance received by Algorithm~\ref{lem:element_out}, we get by Lemma~\ref{lem:element_out} that there exists a coordinate $i \in [n] \setminus \{h_1, h_2\}$ such that\footnote{As stated, Lemma~\ref{lem:element_out} applies only to the optimal solution, not to every feasible solution. However, one can verify that its proof does not use the optimality of the solution.}
{\allowdisplaybreaks
\begin{align*}
	F'(\outputCa)
	\geq{} &
	(1 - 1/e - 2\eps) \cdot F'\left(\OPT - \sum_{j = 1}^2 \OPT_{h_j} \cdot \basic_{h_j} - \OPT_i \basic_i\right) - \eps BL\\
	\geq{} &
	(1 - 1/e - 2\eps) \cdot \left[F'\left(\OPT - \sum_{j = 1}^2 \OPT_{h_j} \cdot \basic_{h_j}\right) - F'(\OPT_i \basic_i)\right] - \eps BL\\
	\geq{} &
	(1 - 1/e - 2\eps) \cdot \left[F(\OPT) - F\left(\sum_{j = 1}^2 \OPT_{h_j} \cdot \basic_{h_j}\right) - F'(\OPT_i \basic_i)\right] - \eps BL\\
	\geq{} &
	(1 - 1/e - 2\eps) \cdot \left[F(\OPT) - \frac{3}{2} \cdot F\left(\sum_{j = 1}^2 \OPT_{h_j} \cdot \basic_{h_j}\right)\right] - \eps BL
	\enspace,
\end{align*}
where the second inequality follows from the submodularity of $F$, and the last inequality holds since the submodularity of $F$ and the definitions of $h_1$ and $h_2$ imply
\begin{align*}
	F'(\OPT_i \basic_i)
	={} &
	F\left(\sum_{j = 1}^2 \OPT_{h_j} \cdot \basic_{h_j} + \OPT_i \basic_i\right) - F\left(\sum_{j = 1}^2 \OPT_{h_j} \cdot \basic_{h_j}\right)\\
	\leq{} &
	\frac{1}{2}\left[F(\OPT_i \basic_i) - F(\vzero) + F(\OPT_{h_1} + \OPT_i \basic_i) - F(\OPT_{h_1}\basic_{h_1})\right]\\
	\leq{} &
	\frac{1}{2}\left[F(\OPT_{h_1} \basic_{h_1}) - F(\vzero) + F(\OPT_{h_1} + \OPT_{h_2} \basic_{h_2}) - F(\OPT_{h_1}\basic_{h_1})\right]\\
	={} &
	\frac{1}{2}\left[F(\OPT_{h_1}\basic_{h_1} + \OPT_{h_2} \basic_{h_2}) - F(\vzero)\right]
	\leq
	\frac{F(\OPT_{h_1}\basic_{h_1} + \OPT_{h_2} \basic_{h_2})}{2}
	\enspace.
\end{align*}
}%

We are now ready to calculate the value of $\outputCaPP = \vx + \outputCa$. By the above calculation and Lemma~\ref{lem:guessing_properties},
\begin{align*}
	F(\vx + \outputCa)
	={} &
	F'(\outputCa) + F(\vx)\\
	\geq{} &
	(1 - 1/e - 2\eps) \cdot \left[F(\OPT) - \frac{3}{2} \cdot F\left(\sum_{j = 1}^2 \OPT_{h_j} \cdot \basic_{h_j}\right)\right] - \eps BL \\&+ F\left(\sum_{j = 1}^2 \OPT_{h_j} \cdot \basic_{h_j}\right) - 2\eps \cdot F(OPT) - 2\eps L\\
	\geq{} &
	(1 - 1/e - 4\eps) \cdot F(\OPT) - \eps (B + 2)L
	\enspace.
	\tag*{\qedhere}
\end{align*}
\end{proof}%
}
\inFull{\proofApproximationRatio}

To get our final \AlgFECA algorithm, we need to explain how to implement the guesses of Algorithm~\ref{alg:three_coordinate_enhancement_guess}. The coordinates $h_1$ and $h_2$ can be guessed by simply iterating over all the possible pairs of two coordinates. Similarly, by the next observation, to get $v_i$ it suffices to try all the possible values in the set $\{F(\vx) + \eps j \cdot F(u_{h_i}\basic_{h_i}) \mid \text{$j$ is a non-negative integer and } F(\vx) + \eps j \cdot F(u_{h_i}\basic_{h_i}) \leq F(\vx + u_{h_i}\basic_{h_i})\}$. In the following, we refer to this set as $\cJ(\vx, h_i)$.

\begin{restatable}{observation}{obsGuessSet} \label{obs:guess_set}
Consider the vector $\vx$ at the point in which Algorithm~\ref{alg:three_coordinate_enhancement_guess} guesses the value $v_i$. Then, there exists a value in the set $\cJ(\vx, h_i)$ obeying the requirements from $v_i$.
\end{restatable}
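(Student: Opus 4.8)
The plan is to exhibit, among the grid values that make up $\cJ(\vx, h_i)$, one that meets all the requirements imposed on $v_i$, namely $\max\{F(\vx),\, F(\vx + \OPT_{h_i}\basic_{h_i}) - \eps F(\OPT)\} \le v_i \le F(\vx + \OPT_{h_i}\basic_{h_i})$. Before the case analysis I would record two elementary facts. First, since $u_{h_i} \le B$, the point $u_{h_i}\basic_{h_i}$ is a feasible solution of Problem~\eqref{eq:problem}, so the optimality of $\OPT$ gives $F(u_{h_i}\basic_{h_i}) \le F(\OPT)$; in particular the grid spacing $\eps F(u_{h_i}\basic_{h_i})$ used to define $\cJ(\vx, h_i)$ is at most the slack $\eps F(\OPT)$ permitted in the lower requirement. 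Second, at the moment $v_i$ is guessed the coordinate $h_i$ is not in the support of $\vx$ (which equals $\vzero$ when $i = 1$ and $y_1 \basic_{h_1}$ with $h_1 \neq h_2$ when $i = 2$), so monotonicity gives $F(\vx) \le F(\vx + \OPT_{h_i}\basic_{h_i}) \le F(\vx + u_{h_i}\basic_{h_i})$, and submodularity applied to $\vx$ and $\OPT_{h_i}\basic_{h_i}$ (whose coordinate-wise minimum is $\vzero$ and maximum is $\vx + \OPT_{h_i}\basic_{h_i}$) gives $F(\vx + \OPT_{h_i}\basic_{h_i}) - F(\vx) \le F(\OPT_{h_i}\basic_{h_i}) - F(\vzero) \le F(u_{h_i}\basic_{h_i})$.

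If $F(u_{h_i}\basic_{h_i}) = 0$, the last chain of inequalities forces $F(\vx + \OPT_{h_i}\basic_{h_i}) = F(\vx)$, so the value $v_i = F(\vx)$ trivially satisfies every requirement; moreover, with zero spacing the set $\cJ(\vx, h_i)$ reduces to the singleton $\{F(\vx)\}$ (the membership condition $F(\vx) \le F(\vx + u_{h_i}\basic_{h_i})$ holding by monotonicity), so this value indeed lies in $\cJ(\vx, h_i)$ and the claim holds in this case.

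Otherwise $s \doteq \eps F(u_{h_i}\basic_{h_i}) > 0$, and I would take $j^\star = \lfloor (F(\vx + \OPT_{h_i}\basic_{h_i}) - F(\vx))/s \rfloor$, which is a non-negative integer by monotonicity, and set $v_i = F(\vx) + j^\star s$. Then $v_i \ge F(\vx)$ is immediate from $j^\star \ge 0$; the definition of the floor gives $v_i \le F(\vx + \OPT_{h_i}\basic_{h_i}) \le F(\vx + u_{h_i}\basic_{h_i})$, which simultaneously yields the upper requirement on $v_i$ and (together with $j^\star$ being a non-negative integer) shows $v_i \in \cJ(\vx, h_i)$; and the maximality of $j^\star$ gives $v_i > F(\vx + \OPT_{h_i}\basic_{h_i}) - s \ge F(\vx + \OPT_{h_i}\basic_{h_i}) - \eps F(\OPT)$, which is the remaining part of the lower requirement. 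Hence $v_i$ obeys all requirements and belongs to $\cJ(\vx, h_i)$.

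Every step here is routine; the only point that calls for a little attention is the degenerate case $F(u_{h_i}\basic_{h_i}) = 0$, where one must notice that submodularity collapses $F(\vx + \OPT_{h_i}\basic_{h_i})$ down to $F(\vx)$ so that the singleton $\cJ(\vx, h_i) = \{F(\vx)\}$ still contains an admissible guess. I do not anticipate a genuine obstacle in the argument.
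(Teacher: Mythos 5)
Your argument is correct and follows essentially the same route as the paper: pick $j^\star = \lfloor (F(\vx + \OPT_{h_i}\basic_{h_i}) - F(\vx))/(\eps F(u_{h_i}\basic_{h_i})) \rfloor$, observe it is a non-negative integer by monotonicity, verify membership in $\cJ(\vx, h_i)$ via $F(\vx + \OPT_{h_i}\basic_{h_i}) \leq F(\vx + u_{h_i}\basic_{h_i})$, and use feasibility of $u_{h_i}\basic_{h_i}$ to replace the grid spacing $\eps F(u_{h_i}\basic_{h_i})$ by $\eps F(\OPT)$ in the lower requirement. The only difference is that you make the degenerate case $F(u_{h_i}\basic_{h_i}) = 0$ explicit, which the paper implicitly glosses over (its "maximal integer $j$" is not well-defined when the grid spacing is zero); this is a sound, minor tightening rather than a different approach.
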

\newcommand{\proofGuessSet}{%
\begin{proof}
Let $j$ be the maximal integer for which $F(\vx) + \eps j \cdot F(u_{h_i}\basic_{h_i}) \leq F(\vx + \OPT_{h_i} \basic_{h_i}) \leq F(\vx + u_{h_i} \basic_{h_i})$. Since $F(\vx) \leq F(\vx + \OPT_{h_i} \basic_{h_i})$ by the monotonicity of $F$, $j$ is non-negative, and thus, $F(\vx) + \eps j \cdot F(u_{h_i}\basic_{h_i})$ belongs to $\cJ(\vx, h_i)$ and $F(\vx) + \eps j \cdot F(u_{h_i}\basic_{h_i}) \geq F(\vx)$. Furthermore, by the definition of $j$,
\[
	F(\vx) + \eps j \cdot F(u_{h_i}\basic_{h_i})
	\geq
	F(\vx + \OPT_{h_i} \basic_{h_i}) - \eps \cdot F(u_{h_i}\basic_{h_i})
	\geq
	F(\vx + \OPT_{h_i} \basic_{h_i}) - \eps \cdot F(\OPT)
	\enspace,
\]
where the second inequality holds since $u_{h_i}\basic_{h_i}$ is a feasible solution. Thus, $F(\vx) + \eps j \cdot F(u_{h_i}\basic_{h_i})$ obeys the requirements from $v_i$.
\end{proof}%
}
\inFull{\proofGuessSet}

Our final \AlgFECA algorithm appears as Algorithm~\ref{alg:three_coordinate_enhancement}. By the above discussion, the number of iterations it makes exceeds the number of iterations given by Observation~\ref{obs:guess_time_complexity} only by a factor of
\begin{align*}
	n^2 \cdot \prod_{i = 1}^2 |\cJ(\vx, h_i)|
	\leq{} &
	n^2 \cdot\prod_{i = 1}^2\left(1 + \frac{F(\vx + u_{h_i}\basic_{h_i}) - F(\vx)}{\eps \cdot F(u_{h_i}\basic_{h_i})}\right)
	=
	O(\eps^{-2}n^2)
	\enspace,
\end{align*}
where the equality holds since the submodulrity and non-negativity of $f$ imply $F(\vx + u_{h_i}\basic_{h_i}) - F(\vx) \leq F(u_{h_i}\basic_{h_i})$.

\begin{algorithm}
\caption{\AlgFECA $(\eps)$} \label{alg:three_coordinate_enhancement}
\For{every pair of distinct coordinates $h_1, h_2 \in [n]$}
{
	Let $\vx^{(0)} \gets \vzero$.\\
	\For{every $v_1 \in \cJ(\vx^{(0)}, h_1)$} 
	{
		Let $y_1$ be the value returned by the algorithm guaranteed by Proposition~\ref{prop:one_coordinate_get_value} given $\vx^{(0)}$ as the input vector, the coordinate $h_1$ and the target value $v_1$.\\
		Set $\vx^{(1)} \gets \vx^{(0)} + y_1 \basic_{h_1}$.\\
		\For{every $v_2 \in \cJ(\vx^{(1)}, h_2)$}
		{
			Let $y_2$ be the value returned by the algorithm guaranteed by Proposition~\ref{prop:one_coordinate_get_value} given $\vx^{(1)}$ as the input vector, the coordinate $h_2$ and the target value $v_2$.\\
			Update $\vx \gets \vx^{(1)} + y_2 \basic_{h_2}$.\\
			Execute Algorithm~\ref{alg:greedy} on the instance obtained by removing the coordinates $h_1$ and $h_2$, replacing the objective function with $F'(\vx') = F(\vx' + \vx) - F(\vx)$ and decreasing $B$ by $\|\vx\|_1$. Let $\outputCa$ be the output of Algorithm~\ref{alg:greedy}.\\
			Mark $\vx + \outputCa$ as a candidate solution.
		}
	}
}
Return the solution maximizing $F$ among all the solutions marked above as candidate solutions.
\end{algorithm}

The next theorem summarizes the result we have proved in this section.
\begin{theorem}
For every $\eps \in (0, 1)$, Algorithm~\ref{alg:three_coordinate_enhancement} is an algorithm for our problem which produces a solution of value at least $(1 - 1/e - 4\eps) \cdot F(\OPT) - \eps (B + 2)L$. It has $O(n^3/\eps^3)$ iterations, each running in $O(n\sqrt{B/\eps} + n \log n)$ time, which yields a time complexity of $O(n^4\sqrt{B}/\eps^{2.5} + n^4 \log n / \eps^3)$.
\end{theorem}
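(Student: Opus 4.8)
The plan is to derive the theorem by combining the approximation guarantee already proved for the ``guessing'' algorithm (Lemma~\ref{lem:approximation_ratio}) with the per-run running-time accounting of Observation~\ref{obs:guess_time_complexity}, using the fact that Algorithm~\ref{alg:three_coordinate_enhancement} is nothing but Algorithm~\ref{alg:three_coordinate_enhancement_guess} with all of its guesses replaced by an exhaustive search. For $n < 3$ I would first pad the instance with dummy coordinates that do not affect $F$, so it suffices to treat $n \geq 3$.

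\emph{Approximation ratio.} Algorithm~\ref{alg:three_coordinate_enhancement} ranges over every ordered pair of distinct coordinates $(h_1, h_2)$ and, for each such pair, over $v_1 \in \cJ(\vx^{(0)}, h_1)$ and then $v_2 \in \cJ(\vx^{(1)}, h_2)$; for each choice it executes precisely the computation that Algorithm~\ref{alg:three_coordinate_enhancement_guess} performs under those guesses, and marks the resulting vector $\vx + \outputCa$. Now consider the pair $(h_1, h_2)$ that Algorithm~\ref{alg:three_coordinate_enhancement_guess} would guess; by Observation~\ref{obs:guess_set}, applied with $\vx$ instantiated to the value it holds at the moment each $v_i$ is chosen, the sets $\cJ(\vx^{(0)}, h_1)$ and $\cJ(\vx^{(1)}, h_2)$ each contain a value obeying the inequality required from $v_i$ in Algorithm~\ref{alg:three_coordinate_enhancement_guess}. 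Hence one branch of the nested loops is a faithful, legal run of Algorithm~\ref{alg:three_coordinate_enhancement_guess}, so by Lemma~\ref{lem:approximation_ratio} the candidate marked in that branch has value at least $(1 - 1/e - 4\eps) \cdot F(\OPT) - \eps(B + 2)L$. Every marked candidate is feasible (each $y_i \leq u_{h_i}$ by the output range guaranteed in Proposition~\ref{prop:one_coordinate_get_value}, $\outputCa$ is feasible for the contracted instance whose budget is $B - \|\vx\|_1$, and the two supports are disjoint), so returning the marked candidate of maximum $F$-value yields the stated guarantee.

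\emph{Running time.} There are fewer than $n^2$ pairs $(h_1, h_2)$, and for every relevant vector $\vx$ we have $|\cJ(\vx, h_i)| \leq 1 + \frac{F(\vx + u_{h_i}\basic_{h_i}) - F(\vx)}{\eps \cdot F(u_{h_i}\basic_{h_i})} = O(1/\eps)$, since the submodularity and non-negativity of $F$ give $F(\vx + u_{h_i}\basic_{h_i}) - F(\vx) \leq F(u_{h_i}\basic_{h_i})$; thus the nested loops make $O(n^2/\eps^2)$ passes. Inside a pass, the two invocations of the algorithm of Proposition~\ref{prop:one_coordinate_get_value} cost $O(\log(B/\eps))$ each and are dominated by the cost of a single iteration of Algorithm~\ref{alg:greedy}, which — applied to the contracted instance, whose budget is at most $B$ — makes $O(n/\eps)$ iterations each running in $O(n\sqrt{B/\eps} + n \log n)$ time by Observation~\ref{obs:complexity_greedy} (this is exactly the per-run count of Observation~\ref{obs:guess_time_complexity}). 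Multiplying the $O(n^2/\eps^2)$ passes by the $O(n/\eps)$ iterations per pass gives $O(n^3/\eps^3)$ iterations in total, each of cost $O(n\sqrt{B/\eps} + n \log n)$, which yields the claimed overall time complexity.

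\emph{Main obstacle.} There is no genuinely hard step here: every ingredient — Lemma~\ref{lem:approximation_ratio}, Observation~\ref{obs:guess_set}, Observation~\ref{obs:guess_time_complexity}, Observation~\ref{obs:complexity_greedy} — has already been established. The only points that need a little care are the $O(1/\eps)$ bound on $|\cJ(\vx, h_i)|$, which rests on the elementary submodularity-plus-non-negativity inequality quoted above, and the bookkeeping observation that Observation~\ref{obs:guess_set} must be invoked with $\vx$ taking its value at the precise moment of each guess so that some enumeration branch reproduces a legal execution of Algorithm~\ref{alg:three_coordinate_enhancement_guess}.
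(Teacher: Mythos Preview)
Your proposal is correct and mirrors the paper's own argument: the paper proves the theorem by exactly the combination you describe---Lemma~\ref{lem:approximation_ratio} for the value guarantee, Observation~\ref{obs:guess_set} to ensure some enumeration branch reproduces a legal run of Algorithm~\ref{alg:three_coordinate_enhancement_guess}, and the $O(n^2/\eps^2)$ bound on $n^2\prod_i |\cJ(\vx,h_i)|$ (via the submodularity-plus-non-negativity inequality) multiplied against the per-run iteration count of Observation~\ref{obs:guess_time_complexity}. The theorem is in fact stated in the paper as a summary of this preceding discussion rather than given a separate proof, so there is nothing further to compare.
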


In all the loops of Algorithm~\ref{alg:three_coordinate_enhancement}, the iterations are independent, and thus, can be done in parallel instead of sequentially. Thus, the parallel time required for Algorithm~\ref{alg:three_coordinate_enhancement} is equal to the time complexity of Algorithm~\ref{alg:three_coordinate_enhancement_guess}, which by Observation~\ref{obs:guess_time_complexity} is only $O(n^2\sqrt{B}/\eps^{1.5} + n^2 \log n / \eps)$.

\section{Conclusion}
In this paper, we provided the first constant factor approximation guarantees for the problem of maximizing a monotone continuous submodular function subject to a linear constraint. Crucially, our results did not rely on DR-submodularity. 

\bibliographystyle{plainnat}
\bibliography{References}

\begin{thebibliography}{54}
\providecommand{\natexlab}[1]{#1}
\providecommand{\url}[1]{\texttt{#1}}
\expandafter\ifx\csname urlstyle\endcsname\relax
  \providecommand{\doi}[1]{doi: #1}\else
  \providecommand{\doi}{doi: \begingroup \urlstyle{rm}\Url}\fi

\bibitem[Bach(2015)]{bach2015submodular}
Francis~R. Bach.
\newblock {Submodular Functions: from Discrete to Continous Domains}.
\newblock \emph{CoRR}, abs/1511.00394, 2015.

\bibitem[Balkanski et~al.(2016)Balkanski, Mirzasoleiman, and
  Singer]{balkanski2016learning}
Eric Balkanski, Baharan Mirzasoleiman, and Yaron Singer.
\newblock {Learning Sparse Combinatorial Representations via Two-stage
  Submodular Maximization}.
\newblock In \emph{Proceedings of The 33rd International Conference on Machine
  Learning}, pages 2207--2216, 2016.

\bibitem[Balkanski et~al.(2019)Balkanski, Rubinstein, and
  Singer]{balkanski2019optimal}
Eric Balkanski, Aviad Rubinstein, and Yaron Singer.
\newblock An optimal approximation for submodular maximization under a matroid
  constraint in the adaptive complexity model.
\newblock In \emph{Proceedings of the 51st Annual ACM SIGACT Symposium on
  Theory of Computing}, 2019.

\bibitem[Bian et~al.(2017{\natexlab{a}})Bian, Levy, Krause, and
  Buhmann]{NIPS2017_6652}
An~Bian, Kfir~Yehuda Levy, Andreas Krause, and Joachim~M. Buhmann.
\newblock Non-monotone continuous dr-submodular maximization: Structure and
  algorithms.
\newblock In \emph{Advances in Neural Information Processing Systems
  (NeurIPS)}, pages 486--496, 2017{\natexlab{a}}.

\bibitem[Bian et~al.(2017{\natexlab{b}})Bian, Mirzasoleiman, Buhmann, and
  Krause]{bian2016guaranteed}
Andrew~An Bian, Baharan Mirzasoleiman, Joachim~M. Buhmann, and Andreas Krause.
\newblock {Guaranteed Non-convex Optimization: Submodular Maximization over
  Continuous Domains}.
\newblock In \emph{International Conference on Artificial Intelligence and
  Statistics ({AISTATS})}, 2017{\natexlab{b}}.

\bibitem[Buchbinder and Feldman(2019)]{buchbinder2019constrained}
Niv Buchbinder and Moran Feldman.
\newblock Constrained submodular maximization via a non-symmetric technique.
\newblock \emph{Math. Oper. Res.}, 44\penalty0 (3):\penalty0 988--1005, 2019.

\bibitem[Calinescu et~al.(2011{\natexlab{a}})Calinescu, Chekuri, Pal, and
  Vondrak]{calinescu11maximizing}
Gruia Calinescu, Chandra Chekuri, Martin Pal, and Jan Vondrak.
\newblock Maximizing a submodular set function subject to a matroid constraint.
\newblock \emph{SIAM Journal on Computing}, 2011{\natexlab{a}}.

\bibitem[Calinescu et~al.(2011{\natexlab{b}})Calinescu, Chekuri, P{\'a}l, and
  Vondr{\'a}k]{calinescu2011maximizing}
Gruia Calinescu, Chandra Chekuri, Martin P{\'a}l, and Jan Vondr{\'a}k.
\newblock Maximizing a monotone submodular function subject to a matroid
  constraint.
\newblock \emph{SIAM Journal on Computing}, 40\penalty0 (6):\penalty0
  1740--1766, 2011{\natexlab{b}}.

\bibitem[Celis et~al.(2016)Celis, Deshpande, Kathuria, and
  Vishnoi]{celis2016fair}
L.~Elisa Celis, Amit Deshpande, Tarun Kathuria, and Nisheeth~K. Vishnoi.
\newblock {How to be Fair and Diverse?}
\newblock \emph{CoRR}, abs/1610.07183, 2016.

\bibitem[Chekuri and Quanrud(2019)]{chekuri2019parallelizing}
Chandra Chekuri and Kent Quanrud.
\newblock Parallelizing greedy for submodular set function maximization in
  matroids and beyond.
\newblock In \emph{Proceedings of the 51st Annual ACM SIGACT Symposium on
  Theory of Computing}, 2019.

\bibitem[Chekuri et~al.(2014)Chekuri, Vondr{\'a}k, and
  Zenklusen]{chekuri2014submodular}
Chandra~Sekhar Chekuri, Jan Vondr{\'a}k, and Rico Zenklusen.
\newblock Submodular function maximization via the multilinear relaxation and
  contention resolution schemes.
\newblock \emph{SIAM Journal on Computing}, 2014.

\bibitem[Chen et~al.(2018{\natexlab{a}})Chen, Harshaw, Hassani, and
  Karbasi]{chen2018projection}
Lin Chen, Christopher Harshaw, Hamed Hassani, and Amin Karbasi.
\newblock Projection-free online optimization with stochastic gradient: From
  convexity to submodularity.
\newblock In \emph{International Conference on Machine Learning},
  2018{\natexlab{a}}.

\bibitem[Chen et~al.(2018{\natexlab{b}})Chen, Hassani, and
  Karbasi]{chen2018online}
Lin Chen, Hamed Hassani, and Amin Karbasi.
\newblock {Online Continuous Submodular Maximization}.
\newblock In \emph{International Conference on Artificial Intelligence and
  Statistics ({AISTATS})}, pages 1896--1905, 2018{\natexlab{b}}.

\bibitem[Chen et~al.(2019)Chen, Feldman, and Karbasi]{chen2019unconstrained}
Lin Chen, Moran Feldman, and Amin Karbasi.
\newblock Unconstrained submodular maximization with constant adaptive
  complexity.
\newblock In \emph{Proceedings of the 51st Annual ACM SIGACT Symposium on
  Theory of Computing}, pages 102--113, 2019.

\bibitem[Cohen and Katzir(2008)]{cohen2008generalized}
Reuven Cohen and Liran Katzir.
\newblock The generalized maximum coverage problem.
\newblock \emph{Inf. Process. Lett.}, 108\penalty0 (1):\penalty0 15--22, 2008.
\newblock \doi{10.1016/j.ipl.2008.03.017}.
\newblock URL \url{https://doi.org/10.1016/j.ipl.2008.03.017}.

\bibitem[Das and Kempe(2011)]{das2011submodular}
Abhimanyu Das and David Kempe.
\newblock {Submodular meets Spectral: Greedy Algorithms for Subset Selection,
  Sparse Approximation and Dictionary Selection}.
\newblock In \emph{{International Conference on Machine Learning}}, pages
  1057--1064, 2011.

\bibitem[Devanur and Jain(2012)]{devanur2012online}
Nikhil~R Devanur and Kamal Jain.
\newblock Online matching with concave returns.
\newblock In \emph{Proceedings of the forty-fourth annual ACM symposium on
  Theory of computing}, pages 137--144. ACM, 2012.

\bibitem[Elenberg et~al.(2017)Elenberg, Dimakis, Feldman, and
  Karbasi]{elenbergDFK17}
Ethan~R. Elenberg, Alexandros~G. Dimakis, Moran Feldman, and Amin Karbasi.
\newblock {Streaming Weak Submodularity: Interpreting Neural Networks on the
  Fly}.
\newblock In \emph{{Advances in Neural Information Processing Systems}}, 2017.

\bibitem[Ene et~al.(2019)Ene, Nguyen, and Vladu]{ene2019submodular}
Alina Ene, Huy~L Nguyen, and Adrian Vladu.
\newblock Submodular maximization with matroid and packing constraints in
  parallel.
\newblock In \emph{Proceedings of the 51st Annual ACM SIGACT Symposium on
  Theory of Computing}, 2019.

\bibitem[Fallat et~al.(2017)Fallat, Lauritzen, Sadeghi, Uhler, Wermuth, and
  Zwiernik]{fallat2017total}
Shaun Fallat, Steffen Lauritzen, Kayvan Sadeghi, Caroline Uhler, Nanny Wermuth,
  and Piotr Zwiernik.
\newblock {Total positivity in Markov structures}.
\newblock \emph{The Annals of Statistics}, 45\penalty0 (3):\penalty0
  1152--1184, 2017.

\bibitem[Feldman et~al.(2011)Feldman, Naor, and Schwartz]{feldman2011unified}
Moran Feldman, Joseph Naor, and Roy Schwartz.
\newblock A unified continuous greedy algorithm for submodular maximization.
\newblock In \emph{Foundations of Computer Science (FOCS)}, pages 570--579.
  IEEE, 2011.

\bibitem[Fujishige(1991)]{fujishige91old}
Satoru Fujishige.
\newblock \emph{Submodular functions and optimization}, volume~58.
\newblock Annals of Discrete. Mathematics, North Holland, Amsterdam, 2nd
  edition, 1991.
\newblock ISBN 0-444-88556-0.

\bibitem[Gillenwater et~al.(2012)Gillenwater, Kulesza, and
  Taskar]{gillenwater2012near}
Jennifer Gillenwater, Alex Kulesza, and Ben Taskar.
\newblock Near-optimal map inference for determinantal point processes.
\newblock In \emph{Advances in Neural Information Processing Systems}, pages
  2735--2743, 2012.

\bibitem[Golovin and Krause(2011)]{golovin11}
Daniel Golovin and Andreas Krause.
\newblock Adaptive submodularity: Theory and applications in active learning
  and stochastic optimization.
\newblock \emph{Journal of Artificial Intelligence Research}, 42:\penalty0
  427--486, 2011.

\bibitem[Golovin et~al.(2014)Golovin, Krause, and Streeter]{golovin2014online}
Daniel Golovin, Andreas Krause, and Matthew~J. Streeter.
\newblock {Online Submodular Maximization under a Matroid Constraint with
  Application to Learning Assignments}.
\newblock \emph{CoRR}, abs/1407.1082, 2014.

\bibitem[Guillory and Bilmes(2010)]{guillory10interactive}
Andrew Guillory and Jeff Bilmes.
\newblock {Interactive Submodular Set Cover}.
\newblock In \emph{International Conference on Machine Learning}, pages
  415--422. Omnipress, 2010.

\bibitem[Harshaw et~al.(2019)Harshaw, Feldman, Ward, and
  Karbasi]{harshaw2019submodular}
Chris Harshaw, Moran Feldman, Justin Ward, and Amin Karbasi.
\newblock Submodular maximization beyond non-negativity: Guarantees, fast
  algorithms, and applications.
\newblock In \emph{International Conference on Machine Learning}, 2019.

\bibitem[Hassani et~al.(2019)Hassani, Karbasi, Mokhtari, and
  Shen]{hassani2019stochastic}
Hamed Hassani, Amin Karbasi, Aryan Mokhtari, and Zebang Shen.
\newblock Stochastic conditional gradient++: (non-)convex minimization and
  continuous submodular maximization, 2019.

\bibitem[Hassani et~al.(2017)Hassani, Soltanolkotabi, and
  Karbasi]{hassanigradient2017}
S.~Hamed Hassani, Mahdi Soltanolkotabi, and Amin Karbasi.
\newblock {Gradient Methods for Submodular Maximization}.
\newblock In \emph{{Advances in Neural Information Processing Systems}}, pages
  5843--5853, 2017.

\bibitem[Karbasi et~al.(2019)Karbasi, Hassani, Mokhtari, and
  Shen]{NIPS2019hassani}
Amin Karbasi, Hamed Hassani, Aryan Mokhtari, and Zebang Shen.
\newblock Stochastic continuous greedy ++: When upper and lower bounds match.
\newblock In \emph{Advances in Neural Information Processing Systems
  (NeurIPS)}, pages 13066--13076, 2019.

\bibitem[Karlin and Rinott(1980)]{karlin1980classes}
Samuel Karlin and Yosef Rinott.
\newblock Classes of orderings of measures and related correlation
  inequalities. i. multivariate totally positive distributions.
\newblock \emph{Journal of Multivariate Analysis}, 10\penalty0 (4):\penalty0
  467--498, 1980.

\bibitem[Kazemi et~al.(2018)Kazemi, Zadimoghaddam, and
  Karbasi]{kazemi2018scalable}
Ehsan Kazemi, Morteza Zadimoghaddam, and Amin Karbasi.
\newblock Scalable deletion-robust submodular maximization: Data summarization
  with privacy and fairness constraints.
\newblock In \emph{International conference on machine learning}, pages
  2544--2553, 2018.

\bibitem[Kempe et~al.(2003)Kempe, Kleinberg, and Tardos]{kempe03}
David Kempe, Jon Kleinberg, and \'{E}va Tardos.
\newblock Maximizing the spread of influence through a social network.
\newblock In \emph{Proceedings of the ninth ACM SIGKDD international conference
  on Knowledge discovery and data mining}, pages 137--146. ACM, 2003.

\bibitem[Lei et~al.(2018)Lei, Wu, Chen, Dimakis, Dhillon, and
  Witbrock]{lei2018discrete}
Qi~Lei, Lingfei Wu, Pin-Yu Chen, Alexandros~G Dimakis, Inderjit~S Dhillon, and
  Michael Witbrock.
\newblock Discrete adversarial attacks and submodular optimization with
  applications to text classification.
\newblock \emph{arXiv preprint arXiv:1812.00151}, 2018.

\bibitem[Lin and Bilmes(2012)]{lin2012learning}
Hui Lin and Jeff~A. Bilmes.
\newblock {Learning Mixtures of Submodular Shells with Application to Document
  Summarization}.
\newblock In \emph{Proceedings of the Twenty-Eighth Conference on Uncertainty
  in Artificial Intelligence}, pages 479--490, 2012.

\bibitem[Mehta et~al.(2007)Mehta, Saberi, Vazirani, and
  Vazirani]{mehta2007adwords}
Aranyak Mehta, Amin Saberi, Umesh Vazirani, and Vijay Vazirani.
\newblock Adwords and generalized online matching.
\newblock \emph{Journal of the ACM (JACM)}, 2007.

\bibitem[Mirzasoleiman et~al.(2013)Mirzasoleiman, Karbasi, Sarkar, and
  Krause]{mirzasoleiman2013distributed}
Baharan Mirzasoleiman, Amin Karbasi, Rik Sarkar, and Andreas Krause.
\newblock {Distributed Submodular Maximization: Identifying Representative
  Elements in Massive Data}.
\newblock In \emph{Advances in Neural Information Processing Systems}, pages
  2049--2057, 2013.

\bibitem[Mitrovic et~al.(2019)Mitrovic, Kazemi, Feldman, Krause, and
  Karbasi]{mitrovic2019adaptive}
Marko Mitrovic, Ehsan Kazemi, Moran Feldman, Andreas Krause, and Amin Karbasi.
\newblock Adaptive sequence submodularity.
\newblock In \emph{Advances in Neural Information Processing Systems}, pages
  5353--5364, 2019.

\bibitem[Mokhtari et~al.(2018{\natexlab{a}})Mokhtari, Hassani, and
  Karbasi]{mokhtari2018conditional}
Aryan Mokhtari, Hamed Hassani, and Amin Karbasi.
\newblock Conditional gradient method for stochastic submodular maximization:
  Closing the gap.
\newblock In \emph{International Conference on Artificial Intelligence and
  Statistics}, pages 1886--1895, 2018{\natexlab{a}}.

\bibitem[Mokhtari et~al.(2018{\natexlab{b}})Mokhtari, Hassani, and
  Karbasi]{mokhtari2018decentralized}
Aryan Mokhtari, Hamed Hassani, and Amin Karbasi.
\newblock Decentralized submodular maximization: Bridging discrete and
  continuous settings.
\newblock In \emph{International Conference on Machine Learning},
  2018{\natexlab{b}}.

\bibitem[Mokhtari et~al.(2018{\natexlab{c}})Mokhtari, Hassani, and
  Karbasi]{mokhtari2018stochastic}
Aryan Mokhtari, Hamed Hassani, and Amin Karbasi.
\newblock Stochastic conditional gradient methods: From convex minimization to
  submodular maximization.
\newblock \emph{arXiv preprint arXiv:1804.09554}, 2018{\natexlab{c}}.

\bibitem[Niazadeh et~al.(2018)Niazadeh, Roughgarden, and
  Wang]{niazadeh2018optimal}
Rad Niazadeh, Tim Roughgarden, and Joshua~R. Wang.
\newblock Optimal algorithms for continuous non-monotone submodular and
  {DR}-submodular maximization.
\newblock In \emph{Advances in Neural Information Processing Systems 31}, pages
  9617--9627, 2018.

\bibitem[Nutov(2020)]{perosnal2020nutov}
Zeev Nutov, 2020.
\newblock Personal communication.

\bibitem[Oh~Song et~al.(2017)Oh~Song, Jegelka, Rathod, and Murphy]{oh2017deep}
Hyun Oh~Song, Stefanie Jegelka, Vivek Rathod, and Kevin Murphy.
\newblock Deep metric learning via facility location.
\newblock In \emph{Proceedings of the IEEE Conference on Computer Vision and
  Pattern Recognition}, pages 5382--5390, 2017.

\bibitem[Salehi et~al.(2017)Salehi, Karbasi, Scheinost, and
  Constable]{salehi2017submodular}
Mehraveh Salehi, Amin Karbasi, Dustin Scheinost, and R~Todd Constable.
\newblock A submodular approach to create individualized parcellations of the
  human brain.
\newblock In \emph{International Conference on Medical Image Computing and
  Computer-Assisted Intervention}, pages 478--485. Springer, 2017.

\bibitem[Singla et~al.(2014)Singla, Bogunovic, Bart{\'o}k, Karbasi, and
  Krause]{singla2014near}
Adish Singla, Ilija Bogunovic, G{\'a}bor Bart{\'o}k, Amin Karbasi, and Andreas
  Krause.
\newblock Near-optimally teaching the crowd to classify.
\newblock In \emph{International Conference on Machine Learning}, 2014.

\bibitem[Soma and Yoshida(2015)]{soma2015generalization}
Tasuku Soma and Yuichi Yoshida.
\newblock A generalization of submodular cover via the diminishing return
  property on the integer lattice.
\newblock In \emph{Advances in Neural Information Processing Systems}, pages
  847--855, 2015.

\bibitem[Staib and Jegelka(2017)]{staib2017robust}
Matthew Staib and Stefanie Jegelka.
\newblock {Robust Budget Allocation via Continuous Submodular Functions}.
\newblock In \emph{Proceedings of the 34th International Conference on Machine
  Learning}, pages 3230--3240, 2017.

\bibitem[Tschiatschek et~al.(2017)Tschiatschek, Singla, and
  Krause]{tschiatschek2017selecting}
Sebastian Tschiatschek, Adish Singla, and Andreas Krause.
\newblock Selecting sequences of items via submodular maximization.
\newblock In \emph{AAAI Conference on Artificial Intelligence}, pages
  2667--2673, 2017.

\bibitem[Wolsey(1982{\natexlab{a}})]{wolsey1982analysis}
Laurence~A Wolsey.
\newblock An analysis of the greedy algorithm for the submodular set covering
  problem.
\newblock \emph{Combinatorica}, 1982{\natexlab{a}}.

\bibitem[Wolsey(1982{\natexlab{b}})]{wolsey82}
Laurence~A. Wolsey.
\newblock An analysis of the greedy algorithm for the submodular set covering
  problem.
\newblock \emph{Combinatorica}, 1982{\natexlab{b}}.

\bibitem[Xie et~al.(2019)Xie, Zhang, Shen, Mi, and Qian]{xie2019decentralized}
Jiahao Xie, Chao Zhang, Zebang Shen, Chao Mi, and Hui Qian.
\newblock Decentralized gradient tracking for continuous dr-submodular
  maximization.
\newblock In \emph{International Conference on Artificial Intelligence and
  Statistics}, 2019.

\bibitem[Zhang et~al.(2019)Zhang, Chen, Hassani, and Karbasi]{zhang2019online}
Mingrui Zhang, Lin Chen, Hamed Hassani, and Amin Karbasi.
\newblock Online continuous submodular maximization: From full-information to
  bandit feedback.
\newblock In \emph{Advances in Neural Information Processing Systems}, 2019.

\bibitem[Zhang et~al.(2020)Zhang, Shen, Mokhtari, Hassani, and
  Karbasi]{zhang2019sample}
Mingrui Zhang, Zebang Shen, Aryan Mokhtari, Hamed Hassani, and Amin Karbasi.
\newblock One sample stochastic frank-wolfe.
\newblock In \emph{International Conference on Artificial Intelligence and
  Statistics}, 2020.

\end{thebibliography}

\appendix

\section{Proof of Proposition~\ref*{prop:one_coordinate_ratio}} \label{app:one_coordinate_ratio}

In this section we prove Proposition~\ref{prop:one_coordinate_ratio}. Let us begin by restating this proposition.

\propOneCoordinateRatio*

As might be expected, the algorithm we use to prove Proposition~\ref{prop:one_coordinate_ratio} tries a relatively small set of possible options for $y$, and then outputs the value yielding the maximum $F(\vx + y\basic_i) / y$ ratio. To define the set of values which the algorithm checks, we first need to define the following recursive series.
\[
	z_0 = a
	\qquad
	\text{and}
	\qquad
	z_i = z_{i - 1} + \sqrt{\eps z_{i - 1}}
	\enspace.
\]

Using this definition, we can now formally state the algorithm used to prove Proposition~\ref{prop:one_coordinate_ratio} as Algorithm~\ref{alg:one_coordinate_ratio}.

\begin{algorithm}
\caption{\textsc{One Coordinate Ratio Maximizer}} \label{alg:one_coordinate_ratio}
Let $M = \{b\} \cup \{z_i \mid \text{$i$ is a non-negative integer and } z_i \in [a, b]\}$.\\
Return $y \in \arg \max_{y \in M} F(\vx + y\basic_i) / y$.
\end{algorithm}

Before analyzing the quality of the solution returned by Algorithm~\ref{alg:one_coordinate_ratio}, let us prove that it indeed has the required time complexity.
\begin{lemma}
The time complexity of Algorithm~\ref{alg:one_coordinate_ratio} is $O(\sqrt{B/\eps} + \log (\eps / a))$.
\end{lemma}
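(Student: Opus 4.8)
The plan is to bound the number of elements in the set $M$ that Algorithm~\ref{alg:one_coordinate_ratio} enumerates, since computing the ratio $F(\vx + y\basic_i)/y$ for each candidate takes $O(1)$ time (one function evaluation). The set $M$ consists of the single extra point $b$ together with all terms $z_i$ of the recursive series that happen to lie in $[a,b]$. So the task reduces to two sub-questions: how many terms $z_i$ fall in $[a,b]$, and how much time is needed to generate them and decide which ones are in range.

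\textbf{Main estimate.} First I would observe that the series $z_i$ is strictly increasing (since $\eps, z_{i-1} > 0$), so the indices $i$ with $z_i \in [a,b]$ form a contiguous block $0, 1, \dots, k$ for some $k$, and it suffices to upper bound $k$, i.e., the least index with $z_k \ge b$. The key growth estimate is that, while $z_{i-1} \le B$, each step satisfies $z_i = z_{i-1} + \sqrt{\eps z_{i-1}}$, so $\sqrt{z_i} - \sqrt{z_{i-1}} = \frac{z_i - z_{i-1}}{\sqrt{z_i} + \sqrt{z_{i-1}}} \ge \frac{\sqrt{\eps z_{i-1}}}{2\sqrt{z_i}} \ge \frac{\sqrt{\eps z_{i-1}}}{2\sqrt{z_{i-1} + \sqrt{\eps z_{i-1}}}}$; one then checks this is at least $\tfrac{1}{2}\sqrt{\eps}\cdot c$ for an absolute constant $c>0$ (using $\sqrt{\eps z_{i-1}} \le z_{i-1}$, which holds once $z_{i-1} \ge \eps$; the first step from $z_0 = a$ at least doubles the value, so after $O(\log(\eps/a))$ steps we reach a point $\ge \eps$ if $a < \eps$, and if $a \ge \eps$ this is immediate). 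Hence $\sqrt{z_i}$ increases by a fixed multiple of $\sqrt{\eps}$ per step as long as $z_i \le B$, and since $\sqrt{b} \le \sqrt{B}$, after at most $O(\sqrt{B}/\sqrt{\eps}) = O(\sqrt{B/\eps})$ such steps we have $z_i \ge b$. Combined with the $O(\log(\eps/a))$ initial steps needed to climb from $a$ up to $\eps$, the total number of terms $z_i$ considered is $O(\sqrt{B/\eps} + \log(\eps/a))$.

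\textbf{Assembling the bound.} With $|M| = O(\sqrt{B/\eps} + \log(\eps/a)) + 1$, and each candidate $y \in M$ requiring $O(1)$ work to evaluate $F(\vx + y\basic_i)/y$ and compare against the running maximum, the total running time is $O(\sqrt{B/\eps} + \log(\eps/a))$, matching the claim. I would also note that generating the series is itself cheap: each $z_i$ is computed from $z_{i-1}$ in $O(1)$ arithmetic operations, and we stop as soon as $z_i > b$.

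\textbf{Anticipated obstacle.} The only delicate point is the two-phase nature of the growth: when $a$ is very small compared to $\eps$, the additive increments $\sqrt{\eps z_{i-1}}$ are tiny relative to $\sqrt{\eps}\cdot(\text{constant})$, so the clean ``$\sqrt{z_i}$ grows by $\Theta(\sqrt{\eps})$ per step'' estimate fails in the initial regime $z_{i-1} < \eps$. I would handle this by showing that in that regime $z_i \ge 2 z_{i-1}$ (since $\sqrt{\eps z_{i-1}} \ge \sqrt{z_{i-1} \cdot z_{i-1}} = z_{i-1}$ there), giving geometric growth and hence only $O(\log(\eps/a))$ steps to escape it — this is exactly the source of the $\log(\eps/a)$ term in the stated complexity. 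Getting the case split and the constant $c$ right is the main bookkeeping; everything else is routine.
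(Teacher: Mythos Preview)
Your proof is correct and uses the same two-phase decomposition as the paper: a geometric doubling phase while $z_{i-1}<\eps$ (accounting for the $\log(\eps/a)$ term), followed by a fast-growth phase once $z_{i-1}\ge\eps$ (accounting for the $\sqrt{B/\eps}$ term). The only real difference is in how the second phase is executed. You bound the increment of $\sqrt{z_i}$ directly, showing $\sqrt{z_i}-\sqrt{z_{i-1}}\ge \sqrt{\eps}/(2\sqrt{2})$ once $z_{i-1}\ge\eps$, and then telescope. The paper instead introduces an explicit quadratic comparison function $f(x)=\eps(x^2+16)/16$ and proves by induction that $z_{i+i_0}\ge f(i)$, from which it reads off that $i=\lceil 4\sqrt{B/\eps}\rceil$ steps suffice. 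Both arguments encode the same fact (that $z_i$ grows quadratically in $i$ in this regime); yours is a bit more direct and avoids the auxiliary function, while the paper's gives explicit constants throughout. Either is fine for the claimed asymptotic bound.
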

\begin{proof}
Let $\ell$ be the smallest non-negative integer such that $z_\ell \geq B$. Clearly, the size of $M$ is upper bounded by $\ell + 3$ since $b \leq u_i \leq B$, and thus, the time complexity of Algorithm~\ref{alg:one_coordinate_ratio} is $O(\ell)$. Hence, to prove the lemma, it suffices to argue that there exists a positive integer $i' = O(\sqrt{B / \eps} + \log (\eps / a))$ such that $z_{i'} \geq B$, and thus, $O(\ell) = O(i') = O(\sqrt{B / \eps} + \log (\eps / a))$.

Observe that if $z_i \leq \eps$, then $z_{i + 1} \geq 2z_i$. Thus, for $i_0 = \lceil \log_2 (\eps / a)\rceil$ we already get $z_{i_0} \geq \eps$. Consider now the function $f(x) = \eps (x^2 + 16) / 16$. We would like to prove by induction that for every non-negative integer $i \geq 0$ we have $f(i) \leq z_{i + i_0}$. For $i = 0$ this holds since $f(0) = \eps \leq z_{i_0}$. Assume now that this claim holds for some integer $i - 1 \geq 0$, and let us prove it for $i$. Since $z_{i - 1 + i_0} \geq f(i - 1)$ by the induction hypothesis, it suffices to argue that $f(i) - f(i - 1) \leq z_{i + i_0} - z_{i - 1 + i_0} = \sqrt{\eps z_{i - 1 + i_0}}$. By the definition of $f$,
\begin{align*}
	f(i) - f(&i - 1)
	=
	\frac{\eps (i^2 + 16)}{16} - \frac{\eps [(i - 1)^2 + 16]}{16}
	=
	\frac{\eps(2i - 1)}{16}\\
	\leq{} &
	\frac{\eps \cdot 4\sqrt{(i - 1)^2 + 16}}{16}
	=
	\sqrt{\eps \cdot \frac{\eps[(i - 1)^2 + 16]}{16}}
	=
	\sqrt{\eps \cdot f(i - 1)}
	\leq
	\sqrt{\eps \cdot z_{i - 1 + i_0}}
	\enspace,
\end{align*}
where the second inequality follows from the induction hypothesis, and the first inequality holds since for $i \geq 1$
\begin{align*}
	2i - 1 \leq 4\sqrt{(i - 1)^2 + 16}
	\iff{} &
	(2i - 1)^2 \leq 16[(i - 1)^2 + 16]\\
	\iff{} &
	4i^2 - 4i + 1 \leq 16i^2 - 32i + 272
	\iff
	0 \leq 12i^2 - 28i + 271
	\enspace,
\end{align*}
and the last inequality holds for every $i$.

To complete the proof, it remains to observe that for $i' = \lceil \log_2 (\eps / a)\rceil + \lceil 4\sqrt{B / \eps} \rceil = i_0 + \lceil 4\sqrt{B / \eps} \rceil$ we have
\[
	z_{i'}
	\geq
	f(i' - i_0)
	=
	\frac{\eps((i' - i_0)^2 + 16)}{16}
	\geq
	\frac{\eps(16B / \eps + 16)}{16}
	\geq
	B
	\enspace.
	\tag*{\qedhere}
\]
\end{proof}

Our next objective is to show that the solution produced by Algorithm~\ref{alg:one_coordinate_ratio} approximately maximizes the ratio $F(\vx + y\basic_i) / y$ within the range $[a, b]$. Let $y^*$ be a value within this range that truly maximizes this ratio, and let $y_M$ be the largest value in the set $M$ which is not larger than $y^*$ (possibly $y^* = y_M$ if $y^* \in M$). We argue below that $F(\vx + y^*\basic_i) / y^* \leq F(\vx + y_M\basic_i) / y_M + \eps L$, which completes the proof of Proposition~\ref{prop:one_coordinate_ratio} since the membership of $y_M$ in $M$ implies that the ratio $F(\vx + y\basic_i) / y$ for the value $y$ returned by Algorithm~\ref{alg:one_coordinate_ratio} is at least as good as $F(\vx + y_M\basic_i) / y_M$.

The next lemma gives us a simple upper bound on the ratio $F(\vx + y^*\basic_i) / y^*$.
\begin{lemma}
\[
	\frac{F(\vx + y^*\basic_i)}{y^*}
	\leq
	\frac{F(\vx + y_M\basic_i)}{y_M} + \frac{(y^* - y_M)^2L}{2y_M}
	\enspace.
\]
\end{lemma}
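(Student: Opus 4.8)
The plan is to reduce everything to a one‑dimensional smoothness estimate along the $i$‑th coordinate. Set $g(y) = F(\vx + y\basic_i)$ for $y \in [a,b]$; since $F$ is $L$‑smooth it is in particular continuously differentiable, so $g$ is $C^1$ and, because $a>0$, the quotient $g(y)/y$ is $C^1$ on the whole interval. First I would dispose of the trivial cases: the two endpoints $a = z_0$ and $b$ both lie in $M$, so if the maximizer $y^*$ equals either of them then $y_M = y^*$ and the asserted inequality holds with equality. Hence we may assume $a < y^* < b$.

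Next, I would extract the first‑order optimality condition. Since $y^*$ is an interior maximizer of the differentiable function $y \mapsto g(y)/y$ on $[a,b]$, its derivative vanishes there: $\frac{g'(y^*)\,y^* - g(y^*)}{(y^*)^2} = 0$, i.e.\ $g'(y^*) = g(y^*)/y^*$. Then I would invoke the standard descent lemma: applying the inequality defining $L$‑smoothness to the segment joining $\vx + y^*\basic_i$ and $\vx + y_M\basic_i$ (on which only coordinate $i$ varies, so the relevant directional derivative of $F$ is exactly $g'$) yields
\[
	g(y_M) \;\geq\; g(y^*) + g'(y^*)\,(y_M - y^*) - \tfrac{L}{2}\,(y^* - y_M)^2 \enspace.
\]
Substituting $g'(y^*) = g(y^*)/y^*$, the two $g(y^*)$ terms combine into $\frac{y_M}{y^*}\,g(y^*)$; dividing through by $y_M > 0$ and rearranging gives
\[
	\frac{g(y^*)}{y^*} \;\leq\; \frac{g(y_M)}{y_M} + \frac{(y^* - y_M)^2 L}{2 y_M} \enspace,
\]
which is exactly the claim once $g$ is unfolded.

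The only genuinely delicate point is justifying the use of the first‑order optimality condition, which is precisely why the reduction to $a < y^* < b$ is carried out first: once $y^*$ is known to be interior, differentiability of $g(y)/y$ (guaranteed by $C^1$‑ness of $F$ together with $a>0$) makes that step routine. The smoothness step is a one‑line consequence of the $L$‑smoothness hypothesis restricted to a line segment, so I expect no real obstacle there.
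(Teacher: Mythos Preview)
Your proof is correct and follows essentially the same route as the paper: both arguments derive the first-order optimality condition $g'(y^*) = g(y^*)/y^*$ at the maximizer and then use $L$-smoothness along the segment to compare $g(y_M)$ with $g(y^*)$, arriving at the same rearrangement. The only differences are cosmetic---you invoke the descent lemma by name while the paper writes out the underlying integral, and you explicitly dispatch the endpoint cases $y^* \in \{a,b\}$ (which the paper tacitly skips but which, as you note, are trivial since both endpoints lie in $M$).
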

\begin{proof}
The derivative of $F(\vx + y\basic_i) / y$ by $y$ is
\[
	\frac{\frac{dF}{dy}(\vx + y\basic_i) \cdot y - F(\vx + y\basic_i)}{y^2}
	\enspace.
\]
Since $y^*$ is a maximizer of this ratio, the above derivative must be zero in $y^*$, i.e., we get
\[
	\frac{dF}{dy}(\vx + y^*\basic_i) \cdot y^* - F(\vx + y^*\basic_i) = 0
	\Rightarrow
	\frac{dF}{dy}(\vx + y^*\basic_i) = \frac{F(\vx + y^*\basic_i)}{y^*}
	\enspace.
\]
This allows us to use the smoothness of $F$ to upper bound the difference between $F(\vx + y^*\basic_i)$ and $F(\vx + y_M\basic_i)$ by
\begin{align*}
	F(\vx + y^*\basic_i) - F(\vx + y_M\basic_i&)
	=
	\int_{y_M}^{y^*} \frac{dF}{dy}(\vx + y\basic_i) dy
	\leq
	\int_{y_M}^{y^*} \left[\frac{dF}{dy}(\vx + y^*\basic_i) + (y^* - y)L\right] dy\\
	={} &
	\int_{y_M}^{y^*} \left[\frac{F(\vx + y^*\basic_i)}{y^*} + (y^* - y)L\right] dy
	=
	\frac{y^* - y_M}{y^*} \cdot F(\vx + y^*\basic_i) + \frac{(y^* - y_M)^2L}{2}
	\enspace.
\end{align*}
Rearranging the last inequality, we get
\[
	\frac{y_M}{y^*} \cdot F(\vx + y^*\basic_i)
	\leq
	F(\vx + y_M\basic_i) + \frac{(y^* - y_M)^2L}{2}
	\enspace,
\]
and the observation follows by dividing the last inequality by $y_M$.
\end{proof}

Given the above discussion, the last lemma implies that to prove Proposition~\ref{prop:one_coordinate_ratio} we only need to argue that $\frac{(y^* - y_M)^2L}{2y_M}$ is always upper bounded by $\eps L$. The following observation shows that this is indeed the case.

\begin{observation}
$\frac{(y^* - y_M)^2}{2y_M} \leq \eps$.
\end{observation}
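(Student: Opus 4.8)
The plan is to observe that consecutive elements of $M$ are close to each other (in a scale-dependent sense), and that $y^*$ lies between two consecutive elements of $M$, one of which is $y_M$. First I would note that $M$ is nonempty and that $y_M$ is well defined: since $z_0 = a$ and $a \in [a,b]$, we have $z_0 \in M$, so there is at least one element of $M$ that is $\le y^*$, and moreover every element of $M$ lies in $[a,b]$, so in particular $y_M \ge a > 0$ and the quantity $(y^*-y_M)^2/(2y_M)$ is well defined.

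Next I would dispose of the trivial case. If $y^* = y_M$ the left-hand side is $0 \le \eps$ and we are done, so assume $y^* > y_M$. Then, since $y_M$ is by definition the \emph{largest} element of $M$ not exceeding $y^*$, the next-larger element $y'$ of $M$ (which exists because $b \in M$ and $b \ge y^* > y_M$) satisfies $y' > y^*$. Now I would identify $y'$ and bound the gap $y' - y_M$. Note first that $y_M \ne b$: otherwise $y^* \le b = y_M$, contradicting $y^* > y_M$. Hence $y_M = z_i$ for some non-negative integer $i$. There are two possibilities: either $z_{i+1} \in [a,b]$, in which case $z_{i+1}$ is the next element of $M$ after $z_i$, so $y' = z_{i+1}$ and $y' - y_M = z_{i+1} - z_i = \sqrt{\eps z_i} = \sqrt{\eps y_M}$; or $z_{i+1} > b$, in which case $y' = b$ and $y' - y_M = b - z_i < z_{i+1} - z_i = \sqrt{\eps z_i} = \sqrt{\eps y_M}$. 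In both cases $y' - y_M \le \sqrt{\eps y_M}$.

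Finally, combining $y_M < y^* < y'$ with the gap bound gives $y^* - y_M < y' - y_M \le \sqrt{\eps y_M}$, hence $(y^* - y_M)^2 \le \eps y_M$, and therefore
\[
	\frac{(y^* - y_M)^2}{2y_M} \le \frac{\eps y_M}{2 y_M} = \frac{\eps}{2} \le \eps \enspace.
\]
There is no real obstacle here; the only point requiring a little care is the bookkeeping around the special element $b$ of $M$ (both in checking that $y_M$ is well defined and in the case $y' = b$), which is handled by the strict inequality $b - z_i < \sqrt{\eps z_i}$ that follows from $z_{i+1} > b$.
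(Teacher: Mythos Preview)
Your proof is correct and follows essentially the same approach as the paper: both argue that the construction of $M$ forces $y^* - y_M \le \sqrt{\eps y_M}$, from which the bound $\frac{(y^*-y_M)^2}{2y_M} \le \eps/2 < \eps$ is immediate. The paper compresses this into a single sentence (``by the definition of the set $M$, the value of $y^*$ must be at most $y_M + \sqrt{\eps y_M}$''), whereas you spell out the case analysis around the special element $b$ and the possibility $z_{i+1} > b$; this extra care is harmless and arguably clarifies the paper's assertion.
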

\begin{proof}
By the definition of the set $M$, the value of $y^*$ must be at most $y_M + \sqrt{\eps y_M}$. Thus,
\[
	\frac{(y^* - y_M)^2}{2y_M}
	\leq
	\frac{(\sqrt{\eps y_M})^2}{2y_M}
	=
	\frac{\eps y_M}{2y_M}
	=
	\frac{\eps}{2}
	<
	\eps
	\enspace.
	\tag*{\qedhere}
\]
\end{proof}
\section{Proof of Proposition~\ref*{prop:one_coordinate_get_value}} \label{app:one_coordinate_get_value}

In this section we prove Proposition~\ref{prop:one_coordinate_get_value}. Let us begin by restating this proposition.

\propOneCoordinateGetValue*

The algorithm we use to prove Proposition~\ref{prop:one_coordinate_get_value} has two phases. In the first phase, the algorithm uses binary search to zoom in on a small range of $y$ values which includes the lowest $y$ value for which $F(\vx + y \basic_i) = v$. Then, in the second phase, the algorithm uses linear interpolation to pick a value $y$ from this range for which $F(\vx + y \basic_i)$ is close to $v$. The linear interpolation parameters have to be selected with care to make sure that the value picked obeys the second guarantee of the proposition. A formal statement of the algorithm appears as Algorithm~\ref{alg:one_coordinate_get_value}.

\begin{algorithm}
\DontPrintSemicolon
\caption{\textsc{One Coordinate Getting Target Value}} \label{alg:one_coordinate_get_value}
Let $a = 0$ and $b = u_i - x_i$.\\
\While{$b - a \geq \eps$}
{
	Let $m = (b - a) / 2$.\\
	\lIf{$F(\vx + m \basic_i) \geq v$}{Update $b \gets m$.}
	\lElse{Update $a \gets m$.}
}
Let $d \gets \frac{F(\vx + b \basic_i) - F(\vx + a \basic_i)}{b - a} + \frac{\eps L}{2}$, and $r \gets \frac{v - F(\vx + a \basic_i)}{d}$.\\
Return $a + r$.
\end{algorithm}

We begin the analysis of Algorithm~\ref{alg:one_coordinate_get_value} by showing that has the time complexity guaranteed by Proposition~\ref{prop:one_coordinate_get_value}.

\begin{observation}
The time complexity of Algorithm~\ref{alg:one_coordinate_get_value} is at most $O(\log (B / \eps))$.
\end{observation}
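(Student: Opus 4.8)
The plan is to track the length $b - a$ of the search interval maintained by the while loop of Algorithm~\ref{alg:one_coordinate_get_value}. Initially $a = 0$ and $b = u_i - x_i$, so this length is $u_i - x_i \leq u_i \leq B$, where the last inequality is the normalization assumption (made in the problem formulation) that $u_i \leq B$ for every coordinate $i$. The first key step is to observe that each iteration of the loop replaces one of the two endpoints by the midpoint $m$ of $[a, b]$, and hence halves $b - a$; therefore, after $k$ iterations the interval has length at most $B \cdot 2^{-k}$.

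Next I would invoke the termination condition of the loop: it keeps running only while $b - a \geq \eps$. Combined with the bound above, the loop must stop once $B \cdot 2^{-k} < \eps$, that is, after at most $\lceil \log_2(B/\eps) \rceil + O(1) = O(\log(B/\eps))$ iterations. (If $u_i - x_i < \eps$ to begin with, the loop is not entered at all, which is consistent with this bound.)

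It remains to bound the per-iteration and post-loop work. A single iteration of the loop performs a constant number of arithmetic operations together with one evaluation of $F$ (to decide whether $F(\vx + m\basic_i) \geq v$), so it costs $O(1)$ under the usual convention that one oracle evaluation of $F$ takes unit time. After the loop, computing $d$ and $r$ and returning $a + r$ again uses only a constant number of arithmetic operations and evaluations of $F$. Multiplying the $O(\log(B/\eps))$ iteration count by the $O(1)$ cost per iteration, and adding the $O(1)$ post-loop cost, gives the claimed $O(\log(B/\eps))$ running time.

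There is no genuine obstacle here: the argument is a textbook binary-search analysis. The only point that deserves a moment's attention is the opening inequality $u_i - x_i \leq B$, which relies on the $u_i \leq B$ assumption from the preliminaries rather than on any property of the current point $\vx$.
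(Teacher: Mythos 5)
Your argument is essentially the same as the paper's: both bound the number of binary-search iterations by noting the initial interval length is at most $u_i \leq B$, that it halves each iteration, and that the loop halts once the length drops below $\eps$, giving $O(\log(B/\eps))$ iterations with $O(1)$ work each. The proposal is correct.
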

\begin{proof}
The time complexity of Algorithm~\ref{alg:one_coordinate_get_value} is proportional to the number of iterations made by the binary search in the first phase of the algorithm. Since this binary search starts with a range of size $u_i - x_i \leq u_i \leq B$, and ends when its range shrinks to a size of $\eps$ or less, the number of iterations it performs is upper bounded by $\lceil \log (B / \eps) \rceil$.
\end{proof}

Let us denote now by $a_0$ and $b_0$ the values of the variables $a$ and $b$ when the binary search phase of Algorithm~\ref{alg:one_coordinate_get_value} terminates. By the design of the binary search, it is clear that $F(\vx + a_0 \basic_i) \leq v$. Furthermore, this inequality can hold as an equality only when $a_0 = 0$. Let us now get bounds on the derivative of $F(\vx + y\basic_i)$ as a function of $y$ within the range $[a_0, b_0]$.
\begin{lemma} \label{lem:derivative_bound}
For every $y' \in [a_0, b_0]$, $\frac{dF}{dy}(\vx + y \basic_i) \in [d - \eps L, d]$.
\end{lemma}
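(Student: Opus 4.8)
The plan is to show that the interval $[a_0, b_0]$ produced by the binary search has length at most $\eps$, so that any two points inside it — in particular $y'$ and the point where the derivative equals the slope of the chord — are within $\eps$ of each other, and then invoke $L$-smoothness. First I would record the two facts guaranteed by the binary search: the loop exits with $b_0 - a_0 < \eps$, and throughout the search the invariant $F(\vx + a \basic_i) \le v \le F(\vx + b \basic_i)$ is maintained (the branch increasing $a$ only fires when $F(\vx + m\basic_i) < v$, and the branch decreasing $b$ only fires when $F(\vx + m\basic_i) \ge v$). Hence $F(\vx + a_0\basic_i) \le v \le F(\vx + b_0\basic_i)$, which in particular makes the definitions of $d$ and $r$ in the algorithm sensible (the chord slope is well-defined and, with the $+\eps L/2$ correction, $d$ is not too small).

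Next I would apply the mean value theorem to the one-dimensional function $g(y) = F(\vx + y\basic_i)$ on $[a_0, b_0]$: there is some $\xi \in [a_0, b_0]$ with $g'(\xi) = \frac{F(\vx + b_0\basic_i) - F(\vx + a_0\basic_i)}{b_0 - a_0} = d - \frac{\eps L}{2}$. Now for an arbitrary $y' \in [a_0, b_0]$, $L$-smoothness of $F$ (restricted to the segment in direction $\basic_i$, so that $|g'(y') - g'(\xi)| \le L|y' - \xi|$) together with $|y' - \xi| \le b_0 - a_0 < \eps$ gives $|g'(y') - (d - \tfrac{\eps L}{2})| < \eps L$. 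Therefore $g'(y') \in \big(d - \tfrac{3\eps L}{2},\, d + \tfrac{\eps L}{2}\big)$, and a slightly more careful version of this estimate — using that $g'(y')$ and the chord slope are each within $\frac{\eps L}{2} \cdot \frac{b_0-a_0}{\eps}$... — should be tightened to land exactly in $[d - \eps L, d]$. Concretely, since $g'$ is $L$-Lipschitz on $[a_0,b_0]$, its values span an interval of length at most $L(b_0 - a_0) \le \eps L$; the chord slope $d - \frac{\eps L}{2}$ is a convex combination (an average) of the values of $g'$, hence lies within that spanning interval, so $g'(y')$ differs from $d - \frac{\eps L}{2}$ by at most $\eps L$ on the appropriate side. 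Pushing this through, every $g'(y')$ lies in $[\,(d - \tfrac{\eps L}{2}) - \tfrac{\eps L}{2},\ (d - \tfrac{\eps L}{2}) + \tfrac{\eps L}{2}\,] = [d - \eps L,\ d]$, which is the claim.

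The one genuinely delicate point — the main obstacle — is getting the constants to close exactly at $[d - \eps L, d]$ rather than a looser interval; this is why the algorithm adds precisely $\frac{\eps L}{2}$ to the chord slope when defining $d$. The key observation making it work is that the chord slope equals the \emph{average} of $g'$ over $[a_0,b_0]$ (by the fundamental theorem of calculus, $\frac{1}{b_0-a_0}\int_{a_0}^{b_0} g'(y)\,dy$), so it sits between $\inf g'$ and $\sup g'$ on the interval, and those differ by at most $L(b_0 - a_0) \le \eps L$; shifting by $\frac{\eps L}{2}$ then recenters this length-$\le \eps L$ band to have $d$ as an endpoint from above and $d - \eps L$ as an endpoint from below. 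I would also note for later use (though it is not needed for this lemma) that $d \ge \frac{\eps L}{2} > 0$ by monotonicity of $F$, so the quantity $r = (v - F(\vx + a_0\basic_i))/d$ in the algorithm is well-defined and nonnegative.
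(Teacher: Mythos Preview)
Your plan has a genuine gap in the final step, precisely where you try to ``close the constants.'' From the mean value theorem you correctly get a point $\xi\in[a_0,b_0]$ with $g'(\xi)=d-\tfrac{\eps L}{2}$, and you correctly note this alone gives only $g'(y')\in(d-\tfrac{3\eps L}{2},\,d+\tfrac{\eps L}{2})$, which is too weak. Your attempted fix, however, does not do better: knowing that the range of $g'$ on $[a_0,b_0]$ has length at most $L(b_0-a_0)\le \eps L$ and that the chord slope (the average of $g'$) lies in that range only yields $|g'(y')-(d-\tfrac{\eps L}{2})|\le \eps L$, i.e.\ again the loose interval $[d-\tfrac{3\eps L}{2},\,d+\tfrac{\eps L}{2}]$. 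The jump you make to $[(d-\tfrac{\eps L}{2})-\tfrac{\eps L}{2},\,(d-\tfrac{\eps L}{2})+\tfrac{\eps L}{2}]$ tacitly assumes the average sits at the \emph{midpoint} of the range of $g'$, which is false in general (think of $g'$ linear with slope $L$ on part of the interval and flat on the rest).

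The missing idea is to compare $g'(y')$ not to a single point $\xi$ but to the full average, using the $L$-Lipschitz bound \emph{pointwise under the integral}. Writing the chord slope as $\frac{1}{b_0-a_0}\int_{a_0}^{b_0} g'(y)\,dy$ and using $|g'(y)-g'(y')|\le L|y-y'|$ gives
\[
\Bigl|\,g'(y')-\bigl(d-\tfrac{\eps L}{2}\bigr)\Bigr|\ \le\ \frac{L}{b_0-a_0}\int_{a_0}^{b_0}|y-y'|\,dy\ \le\ \frac{L(b_0-a_0)}{2}\ \le\ \frac{\eps L}{2},
\]
since $\int_{a_0}^{b_0}|y-y'|\,dy=\tfrac{(y'-a_0)^2+(b_0-y')^2}{2}\le\tfrac{(b_0-a_0)^2}{2}$ for $y'\in[a_0,b_0]$. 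This is exactly the factor-of-two improvement you need, and it is the route the paper takes. The $\tfrac{\eps L}{2}$ shift in the definition of $d$ is calibrated to this bound; your range argument cannot recover it.
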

\begin{proof}
By the smoothness of the function $F$
{\allowdisplaybreaks
\begin{align*}
	F(\vx + b_0 \basic_i) - F(\vx + a_0 \basic_i)
	={} &
	\int_{a_0}^{b_0} \frac{dF}{dy}(\vx + y \basic_i) dy
	\leq
	\int_{a_0}^{b_0} \left[\frac{dF}{dy}(\vx + y' \basic_i) + |y - y'|L\right] dy\\
	\leq{} &
	(b_0 - a_0) \cdot \frac{dF}{dy}(\vx + y' \basic_i) + \frac{(b_0 - a_0)^2L}{2}\\
	\leq{} &
	(b_0 - a_0) \cdot \frac{dF}{dy}(\vx + y' \basic_i) + \frac{(b_0 - a_0)\eps L}{2}
	\enspace.
\end{align*}
}
Dividing the last inequality by $b_0 - a_0$, we get
\[
	d - \frac{\eps L}{2}
	\leq
	\frac{dF}{dy}(\vx + y' \basic_i) + \frac{\eps L}{2}
	\implies
	d - \eps L
	\leq
	\frac{dF}{dy}(\vx + y' \basic_i)
	\enspace.
\]

Similarly, the smoothness of $F$ also implies
\begin{align*}
	F(\vx + b_0 \basic_i) - F(\vx + a_0 \basic_i)
	={} &
	\int_{a_0}^{b_0} \frac{dF}{dy}(\vx + y \basic_i) dy
	\geq
	\int_{a_0}^{b_0} \left[\frac{dF}{dy}(\vx + y' \basic_i) - |y - y'|L\right] dy\\
	\geq{} &
	(b_0 - a_0) \cdot \frac{dF}{dy}(\vx + y' \basic_i) - \frac{(b_0 - a_0)^2L}{2}\\
	\geq{} &
	(b_0 - a_0) \cdot \frac{dF}{dy}(\vx + y' \basic_i) - \frac{(b_0 - a_0)\eps L}{2}
	\enspace,
\end{align*}
and this time dividing the last inequality by $b_0 - a_0$ yields
\[
	d - \frac{\eps L}{2}
	\geq
	\frac{dF}{dy}(\vx + y' \basic_i) - \frac{\eps L}{2}
	\implies
	d
	\geq
	\frac{dF}{dy}(\vx + y' \basic_i)
	\enspace.
	\tag*{\qedhere}
\]
\end{proof}

The following corollary now completes the proof of Proposition~\ref{prop:one_coordinate_get_value} since the output of Algorithm~\ref{alg:one_coordinate_get_value} is $a_0 + r$.

\begin{corollary}
$F(\vx + (a_0 + r)\basic_i) \geq v - \eps L$, and furthermore, $F(\vx + y'\basic_i) < v$ for every $0 \leq y' < a_0 + r$.
\end{corollary}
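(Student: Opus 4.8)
~\textbf{Plan.} The goal is to establish the two properties of the output $a_0 + r$ of Algorithm~\ref{alg:one_coordinate_get_value}. Recall that $a_0, b_0$ are the endpoints when the binary search stops, so $b_0 - a_0 < \eps$, and by the invariant of the binary search $F(\vx + a_0 \basic_i) \leq v \leq F(\vx + b_0 \basic_i)$ (with the left inequality strict unless $a_0 = 0$). The plan is to use Lemma~\ref{lem:derivative_bound}, which pins down the derivative of $g(y) := F(\vx + y\basic_i)$ on $[a_0, b_0]$ to the interval $[d - \eps L, d]$, to show that the linear interpolation point $a_0 + r$ (where $r = (v - g(a_0))/d$) is (a) still inside $[a_0, b_0]$, so the derivative bound applies there, and (b) close enough to the true level-$v$ crossing.

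First I would check that $r \in [0, b_0 - a_0]$, so that $a_0 + r$ is a legal point of the interval on which Lemma~\ref{lem:derivative_bound} speaks. Nonnegativity of $r$ is immediate from $v \geq g(a_0)$ and $d > 0$ (note $d \geq \eps L/2 > 0$ from its definition, assuming $L > 0$; the degenerate $L = 0$ case makes $g$ affine and can be handled directly, or absorbed). For the upper bound, since $g(b_0) - g(a_0) = \int_{a_0}^{b_0} g'(y)\,dy \geq (b_0 - a_0)(d - \eps L)$ is not quite what I want, I would instead use the cleaner fact that $d \geq \frac{g(b_0) - g(a_0)}{b_0 - a_0}$ directly from the definition of $d$ (it is that slope plus $\eps L/2$); hence $r = \frac{v - g(a_0)}{d} \leq \frac{v - g(a_0)}{(g(b_0) - g(a_0))/(b_0-a_0)} \leq b_0 - a_0$, using $v \leq g(b_0)$. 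So $a_0 + r \in [a_0, b_0]$.

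Next, for the first property, I would write $g(a_0 + r) - g(a_0) = \int_{a_0}^{a_0 + r} g'(y)\,dy \geq r(d - \eps L) = r d - r\eps L \geq (v - g(a_0)) - \eps L$, where I used the lower derivative bound from Lemma~\ref{lem:derivative_bound} (valid since $[a_0, a_0+r] \subseteq [a_0,b_0]$), then $rd = v - g(a_0)$, and finally $r \leq b_0 - a_0 < \eps \leq 1$ so $r\eps L \leq \eps L$. Rearranging gives $g(a_0 + r) \geq v - \eps L$, which is exactly $F(\vx + (a_0+r)\basic_i) \geq v - \eps L$.

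For the second property, I must show $g(y') < v$ for all $0 \leq y' < a_0 + r$. I would split into two ranges. For $y' \leq a_0$: if $a_0 > 0$ the binary search invariant gives $g(a_0) < v$ strictly, and since $g$ is nondecreasing (monotonicity of $F$), $g(y') \leq g(a_0) < v$; if $a_0 = 0$ then there is nothing to prove unless $r = 0$, but if $r = 0$ then $v = g(a_0) = g(0)$ and again there is no $y' < 0$ to consider. For $a_0 < y' < a_0 + r$: now $y' \in [a_0, b_0]$, so using the \emph{upper} derivative bound $g' \leq d$ from Lemma~\ref{lem:derivative_bound}, $g(y') - g(a_0) = \int_{a_0}^{y'} g'(y)\,dy \leq (y' - a_0) d < r d = v - g(a_0)$, hence $g(y') < v$. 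The main obstacle — really the only delicate point — is getting the strict inequality in the second property right at the boundary cases ($a_0 = 0$, or $r = 0$, or $v$ exactly equal to an endpoint value), and making sure the $\eps \leq 1$ hypothesis from the proposition is invoked so that $r\eps L \leq \eps L$; the rest is a routine application of the sandwiched-derivative lemma to the linear-interpolation estimate.
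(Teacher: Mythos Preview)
Your proof is correct and follows essentially the same approach as the paper: both arguments first confine $r$ to $[0,\,b_0-a_0]$ using $d \geq \frac{g(b_0)-g(a_0)}{b_0-a_0}$ and $v \leq g(b_0)$, then integrate the lower derivative bound from Lemma~\ref{lem:derivative_bound} to obtain $g(a_0+r) \geq v - r\eps L \geq v - \eps L$, and finally use the upper derivative bound together with monotonicity and the binary-search invariant to get the strict inequality for $y' < a_0 + r$. Your treatment of the boundary case $a_0 = 0$ is worded a bit confusingly (when $a_0 = 0$ and $r > 0$ there \emph{is} something to check at $y' = 0$, namely $g(0) < v$, but this follows immediately from $rd = v - g(0) > 0$), though the underlying logic matches the paper's.
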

\begin{proof}
Using Lemma~\ref{lem:derivative_bound}, we get
\begin{align*}
	F(\vx + (a_0 + r)\basic_i)
	={} &
	F(\vx + a_0 \basic_i) + \int_{a_0}^{a_0 + r} \frac{dF}{dy}F(\vx + y\basic_i) dy\\
	\geq{} &
	F(\vx + a_0 \basic_i) + \int_{a_0}^{a_0 + r} (d - \eps L) dy
	=
	F(\vx + a_0 \basic_i) + r(d - \eps L)\\
	={} &
	F(\vx + a_0 \basic_i) + [v - F(\vx + a_0 \basic_i)] - r\eps L
	\geq
	F(\vx + a_0 \basic_i) + [v - F(\vx + a_0 \basic_i)] - \eps L
	\enspace,
\end{align*}
where the third equality holds by plugging in the definition of $r$, and the last inequality holds since
\[
	r
	=
	\frac{v - F(\vx + a \basic_i)}{d}
	\leq
	\frac{F(\vx + b \basic_i) - F(\vx + a \basic_i)}{d}
	\leq
	b - a
	\leq
	\eps
	<
	1
	\enspace.
\]

Similarly, Lemma~\ref{lem:derivative_bound} also implies for every $a_0 \leq y' < a_0 + r$
\begin{align*}
	F(\vx + y' \basic_i)
	={} &
	F(\vx + a_0 \basic_i) + \int_{a_0}^{y'} \frac{dF}{dy}F(\vx + y\basic_i) dy\\
	\leq{} &
	F(\vx + a_0 \basic_i) + \int_{a_0}^{y'} d\; dy
	=
	F(\vx + a_0 \basic_i) + (y' - a_0)d\\
	<{} &
	F(\vx + a_0 \basic_i) + rd
	=
	F(\vx + a_0 \basic_i) + [v - F(\vx + a_0 \basic_i)]
	=
	v
	\enspace.
\end{align*}
If $r > 0$, then the last inequality completes the proof of the corollary because the monotonicity of $F$ guarantees $F(\vx + y' \basic_i) \leq F(\vx + a_0 \basic_i) < v$ for every $0 \leq y' < a_0$. Thus, it remains to consider the case of $r = 0$. This case happens only when $F(\vx + a_0\basic_i) = v$, which implies by the discussion before Lemma~\ref{lem:derivative_bound} that $a_0 = 0$ as well. Hence, the requirement $F(\vx + y'\basic_i) < v$ for every $0 \leq y' < a_0 + r$ is trivial in this case.
\end{proof}
\inShortened{\input{MissingProofs}}

\end{document}